\newcommand{\R}[2]{R^{(#1)}_{#2}}
\newcommand{\M}[2]{M^{(#1)}_{#2}}
\newcommand{\p}[2]{\textit{P}_{\eta_{#1}}({#2})}
\newcommand{\pp}[1]{\textit{P}_{\eta_{#1}}}
\newcommand{\E}{\mathbb{E}}
\newcommand{\RR}{\mathbb{R}}
\renewcommand{\b}[1]{\mathbf{#1}}
\newcommand{\TX}{\b{\Psi}^{x}}
\newcommand{\TU}{\b{\Psi}^{u}}
\newcommand{\Tx}[1]{{\Psi}^{x}_{#1}}
\newcommand{\Tu}[1]{\Psi^{u}_{#1}}
\newcommand{\sat}{\mathrm{sat}}
\theoremstyle{plain}
\newcommand{\lemref}[1]{(Lem.\ref{#1})}
\newcommand{\thmref}[1]{(Thm.\ref{#1})}
\newcommand{\secref}[1]{(Sec.\ref{#1})}
\newcommand{\figref}[1]{Fig. \ref{#1}}
\newcommand{\defref}[1]{(Def. \ref{#1})}
\newcommand{\CLM}[1]{\b{\Psi}^{#1}}
\newcommand{\clm}[2]{\Psi^{#1}_{#2}}
\newcommand{\feas}{\b{\overline{\Phi}}[\mathbf{F}]}
\newcommand{\SL}[2]{\mathrm{SL}(#1,#2)}
\newcommand{\set}[1]{[#1]}
\newcommand{\causal}[2]{\mathcal{C}(\ell^{#1},\ell^{#2})}
\newcommand{\scausal}[2]{\mathcal{C}_s(\ell^{#1},\ell^{#2})}
\newcommand{\lcausal}[2]{\mathcal{L}\mathcal{C}(\ell^{#1},\ell^{#2})}
\newcommand{\lscausal}[2]{\mathcal{L}\mathcal{C}_s(\ell^{#1},\ell^{#2})}
\theoremstyle{definition}
\newtheorem{coro}{Corollary}[section]
\newtheorem{thm}{Theorem}[section]
\newtheorem{lem}{Lemma}[section] % reset theorem numbering for each
\newtheorem{defn}[thm]{Definition} % definition numbers are
\newtheorem{asp}[coro]{Assumption}
\newtheorem{rem}[coro]{Remark}
\algnewcommand\INPUT{\item[\textbf{Input:}]}%
\algnewcommand\OUTPUT{\item[\textbf{Output:}]}%
\title{\LARGE \bf
%Goodbye Lyapunov, Hello System Level Control! \\
%Nonlinear System Level Synthesis \\ 
%for Large Scale Discrete-Time Linear Systems with \\
%Input Saturation and State Constraint
Achieving Performance and Safety in Large Scale Systems with Saturation using a Nonlinear System Level Synthesis Approach
%%%"DIM":
%% Suggestion: Flip it around, like
% Achieving Performance and Safety for Large Scale System with Saturation\\ using a Nonlinear System Level Synthesis Approach 
}
\author{Jing Yu* and Dimitar Ho*% <-this % stops a space
\thanks{* The authors contributed equally to this work.}% <-this % stops a space
\thanks{Jing Yu and Dimitar Ho are with the Department of Computing and Mathematical Sciences, California Institute of Technology, Pasadena, CA. }}%
\renewcommand{\headrulewidth}{0pt} % no line in header area
\begin{document}
\setlength{\abovedisplayskip}{3 pt}
\setlength{\belowdisplayskip}{3 pt}
\maketitle
\thispagestyle{fancy}
\pagestyle{fancy}
\pagestyle{fancy}
\fancyhead{} % clear all header fields
\renewcommand{\headrulewidth}{0pt} % no line in header area
\fancyfoot{} % clear all footer fields
%\fancyfoot[LE,RO]{\thepage}           % page number in "outer" position of footer line
%\fancyfoot[RE,LO]{\textit{to be submitted to American Control Conference 2020}} %

%%%%%%%%%%%%%%%%%%%%%%%%%%%%%%%%%%%%%%%%%%%%%%%%%%%%%%%%%%%%%%%%%%%%%%%
%%%%%%%%%%%%%%%%%%%%%%%%%%%%%%%%%%%%%%%%%%%%%%%%%%%%%%%%%%%%%%%%%%%%%%

\begin{abstract}
% We present a rigorous and practical strategy for handling discrete time linear systems with actuator saturation and state constraints in a way that is localized in implementation, including with communication delays, and thus scalable to arbitrarily large systems. This is achieved by a novel nonlinear control technique that allows interpolation among differently behaving linear controllers. The structure of the nonlinear controller allows for simultaneously satisfying performance and safety objectives defined for small- and large-disturbance regimes. Our method recovers basic anti-windup schemes for saturated systems and linear constrained LQR solution for large scale systems. We further show that the proposed nonlinear controller always outperforms the linear solutions.

We present a novel class of nonlinear controllers that interpolates among differently behaving linear controllers as a case study for recently proposed Linear and Nonlinear System Level Synthesis framework. The structure of the nonlinear controller allows for simultaneously satisfying performance and safety objectives defined for small- and large-disturbance regimes. The proposed controller is distributed, handles delays, sparse actuation, and localizes disturbances. We show our nonlinear controller always outperforms its linear counterpart for constrained LQR problems. We further demonstrate the anti-windup property of an augmented control strategy based on the proposed controller for saturated systems via simulation. 
\end{abstract}

\section{Introduction}
In this paper, we propose a novel offline distributed nonlinear controller synthesis procedure that outperforms any optimal linear controller for the constrained LQR problem \cite{maciejowski2002predictive, zhang2013distributed,li2011disturbance,SLSChenAnderson2019}. With simple augmentation, our controller controller inherently prevents windup-instabilities in saturated linear systems which are traditionally mitigated by additional anti-windup design \cite{kothare1994unified, kolmanovsky2014reference,hu2008anti}. Another significant advantage of the approach, is that despite being a nonlinear synthesis method it naturally enjoys the same benefits as the linear system level approach introduced in \cite{anderson2019system}, which allows for localized controller implementation, making it scalable to large networks. 

 Our work is based on \cite{NLSLSHODoyle2019}, which describes the system-level characterization of the closed loops of general nonlinear discrete-time systems. Moreover, \cite{NLSLSHODoyle2019} introduces a simple universal control structure, called a system level controller, that has the capacity to stabilize any nonlinear system if parametrized with the according closed loop maps. In this paper, we will show that just using a very special case of the framework presented in \cite{NLSLSHODoyle2019} provides new promising tools for control design. In particular, we will illustrate how a simple projection nonlinearity can become a powerful tool for solving the problems described above.

The rest of the paper starts with a review on nonlinear System Level Synthesis (NLSLS) in \secref{sec:nlsls}. The proposed nonlinear controller is introduced in \secref{sec:blend}, followed by \secref{sec:lqr} where the constrained LQR problem is discussed. We show in \secref{sec:ls} that the proposed nonlinear controller can be augmented for natural anti-windup properties and therefore allow for large-scale distributed anti-windup design. Numerical simulation in \secref{sec:sim} corroborates the presented theory.

\section{Preliminaries and Notation}
 We will define $\ell^n$ to be the space of sequences of vectors in $\RR^n$.  Sequences of vectors will be denoted by small bold letters $\b{x}:=(x_t)^{\infty}_{t=0}$  unless otherwise specified.
% Denote column-wise concatenation through $\left[A;B \right] := \left[A^T,B^T\right]^T$.
 Occasionally, we will define sequences explicitly with the tuple notation $\b{x} := (x_0,x_1,\dots)$ and $x^j_t$ denotes the $j$th element of vector $x_t$. 
 We use the $x_{i:j}$ to refer to the truncation of a sequence $\b{x}$ to the tuple $(x_{i},x_{i+1},\dots, x_{j})$. Furthermore, we will adopt $|x|$ and $|A|$ for $x \in \RR^n$ and $A \in \RR^{n \times n}$ as the vector  $\infty$-norm and induced $\infty$-norm on $\RR^n$, respectively.
%  The transpose of truncated sequence of vectors is $x_{1:k}^T := [x_1^T,x_2^T,\dots, x_k^T]^T$.  
We reserve $\|\cdot\|_{p}$ to refer to the norm and induced norm over vector sequence space $\ell _p$:
 \begin{equation*}
    \|\b{x}\|_p:=\left(\sum_{k = 0}^{\infty} \lvert x_k\lvert^p \right)^{\frac{1}{p}} \qquad \|\b{x}\|_{\infty} := \sup_{k\geq 0} |x_k|.
\end{equation*}
 Finally, the set of positive integers ranging from 1 to $N$ will be denoted as $\set{N}$.

 \subsection{Operators}
 Operators that maps between sequence spaces will be denoted in bold capital letters $\b{T}:\ell^n \rightarrow \ell^k $. Similar to the sequence of vectors, we write $\b{T}:= \{T_t\}_{t = 0}^{\infty}$ with its component functions $T_t:\ell^n \rightarrow \RR^k $. 
 An operator $\b{T}$ will be called \textit{causal} if for any pair of input $\b{x}$ and its corresponding output $\b{y} = \b{T}(\b{x})$, the output $y_t$ does not depend on future input sequence $x_{t+k}, k\geq 1$. More precisely, a causal operator $\b{T}$ is fully characterized by its component functions $T_t:\RR^{n \times (t+1)} \rightarrow \RR^k $ such that: 
 $$\b{T}(\b{x})=(T_0(x_0), T_1(x_1,x_0),T_2(x_2,x_1,x_0), \dots).$$
 Note that every component function $T_t$ of a causal operator $\b{T}$ has $t+1$ arguments which are populated in reverse-chronological order. If in addition , component functions $T_t$ satisfy $T_t(x_{t:0}) = T_t(0,x_{t-1:0})$, then $\b{T}$ will be called \textit{strictly causal}.

 We define the space of all causal and strictly causal operators that maps $\ell^n \rightarrow \ell^p$ as $\causal{n}{p}$ and $\scausal{n}{p}$, respectively. Moreover, let the space of all linear causal and strictly causal operators be denoted as $\lcausal{n}{p} \subset \causal{n}{p}$ and $\lscausal{n}{p} \subset \scausal{n}{p}$. Occasionally, for two operators with matching domains such as $\b{A} \in \causal{n}{p}$ and $\b{B}\in \causal{n}{q}$, we denote the composite operator $(\b{A},\b{B}) \in \mathcal{C}(\ell^n,\ell^p \times \ell^q)$ as $(\b{A},\b{B}):\b{x} \mapsto (\b{A}(\b{x}),\b{B}(\b{x}))$.

 \subsection{$\ell_p$ Stability}
 Let the vector sequence space $\ell^n_p \subset \ell^n$ be defined as:
 $$\ell^n_p := \{ \b{x}\in\ell^n\, | \, \|\b{x}\|_p < \infty \}.$$ 
 We define stability for causal operators as follows:
 \begin{defn}[$\ell_p$ Stability]
    An operator $\b{T} \in \causal{n}{m}$ is said to be $\ell_p$-stable, if 
    $$\b{T}(\b{x}) \in \ell^m_p \,\,\,\, \text{for all} \,\, \b{x} \in \ell^n_p.$$
    Further, if there exist two scalars $\gamma$, $\beta \geq 0$ such that for all $\b{x} \in \ell^n_p$, we have 
    $$\|\b{T}(\b{x})\|_p \leq \gamma \|\b{x}\|_p + \beta,$$
    then $\b{T}$ is finite gain $\ell_p$-stable.
 \end{defn}

\section{An Overview of the Nonlinear System Level Approach}\label{sec:nlsls}
 This section will focus on introducing the notion of closed loop maps as causal operators with respect to a general nonlinear causal system. Moreover, we summarize necessary and sufficient conditions for operators to be closed loop maps and how they can be realized by a dynamic controller.

\subsection{Closed Loop Maps of Nonlinear Systems}
Consider a discrete-time nonlinear system with additive disturbances
\begin{equation}
    \label{eq:sys}
    x_{t} = f(x_{t-1},u_{t-1}) +w_t,
\end{equation}
where $x_t\in \mathbb{R}^n$, $u_t\in \mathbb{R}^m$, $w_t\in \mathbb{R}^n$ and $f:\RR^n \times \RR^{m} \rightarrow \RR^n$ with $f(0,0) = 0$ and $x_0 = w_0$. Let $\b{F}(\b{x},\b{u}): \ell^n \times \ell^m \to \ell^n$ be the strictly causal operator representation of the function $f$ such that $\b{F}(\b{x},\b{u}):= (0,f(x_0,u_0),f(x_1,u_1),\dots)$.
% \begin{equation} 
%     \label{eq:f}
%     \b{F}(\b{x},\b{u}):= (0,f(x_0,u_0),f(x_1,u_1),\dots).
% \end{equation}
Assume that $w_t$ can not be measured and that $u_t$ is generated by some causal controller $\b{K} \in \causal{n}{m}$ such that $ u_t = K_t\left(x_{t:0} \right)$. An equivalent operator form of the dynamics \eqref{eq:sys} is
\begin{subequations}
    \label{eq:dyna}
    \begin{align}
        \b{x} &= \b{F}(\b{x},\b{u}) + \b{w} \label{eq:open-loop} \\
        \b{u} &= \b{K}(\b{x}) \label{eq:control}.
    \end{align}
\end{subequations}
% Furthermore let's make the following smoothness assumption:
%\begin{asp}\label{asp:lip}
%$f(x,u)$ is Lipshitz w.r.t $x$ and $u$ with the step $L$.
%\end{asp}
For a fixed disturbance sequence $\b{w}$, the dynamics \eqref{eq:sys} produces unique closed loop trajectories for state $\b{x}$ and input $\b{u}$. Therefore, given a fixed $\b{K}$, the dynamics induce a causal map from $\b{w}$ to $(\b{x}$, $\b{u})$ and we will call the corresponding operators disturbance-to-state and disturbance-to-input \textit{closed loop map}, respectively.
% \begin{defn}[Closed Loop Maps]\label{def:clm}
% We will call the operators $\b{\Phi}^{\b{w} \rightarrow \b{x}}_{[f,\b{K}]}:\ell^n \rightarrow \ell^m$ and $\b{\Phi}^{\b{w} \rightarrow \b{u}}_{[f,\b{K}]}:\ell^n \rightarrow \ell^m$ 
% the disturbance-to-state and disturbance-to-input closed-loop maps(CLMs) of system $f$ w.r.t. controller $\b{K}$ if they map disturbance sequences $\b{w}$ to the corresponding state and input sequences $\b{x}$, $\b{u}$ of the system \eqref{eq:sys}, \textit{i.e.},
% \begin{align}
% && x_{t+1} = f(x_{t},K_{t}(x_{t:0})) +w_t \\
% &\Leftrightarrow &   \b{x} = \CLMofK{x}{\b{K}} (\b{w}), \quad \b{K}(\b{x}) = \CLMofK{u}{\b{K}}(\b{w})
% \end{align}
% we say the pair of operators $\{\CLM{x}$, $\CLM{u}\}$ is CLM of system \eqref{eq:sys}, if there exists a \textit{realizing} controller $\b{K}'$ such that
% \begin{align}
% &\CLM{x}=\CLMofK{x}{\b{K}'} &\CLM{u}=\CLMofK{u}{\b{K}'}
% \end{align}
% \end{defn}
\begin{defn}[Closed Loop Maps]\label{def:clm}
    Define $\b{\Phi}[\b{F},\b{K}] \in \mathcal{C}(\ell^n,\ell^n \times \ell^m)$ as the operator that maps $\b{w}$ to the corresponding response $(\b{x},\b{u})$ according to the closed-loop dynamics \eqref{eq:dyna}. We call $\b{\Phi}[\b{F},\b{K}]$ the \textit{closed loop maps} (CLMs) of \eqref{eq:dyna}. Moreover we will refer to the partial maps $\b{w} \rightarrow \b{x}$ and $\b{w} \rightarrow \b{u}$ with $\b{\Phi}^{\b{x}}[\b{F},\b{K}]$ and $\b{\Phi}^{\b{u}}[\b{F},\b{K}]$, respectively. 
\end{defn}
Without specifying a controller $\b{K}$, one could alternatively consider the \textit{realizable} CLMs of \eqref{eq:open-loop} for \textit{some} causal controller $\b{K'}$. We call a composite operator $\b{\Psi} = (\b{\Psi}^{\b{x}}, \CLM{u}) \in \mathcal{C}(\ell^n,\ell^n \times \ell^m)$ that maps $\b{w} \mapsto \left(\b{\Psi}^{\b{x}}(\b{w}), \b{\Psi}^{\b{u}}(\b{w}) \right)$  \textit{realizable CLMs} for open-loop dynamic \eqref{eq:open-loop} if there exists a so-called \textit{realizing} controller $\b{K}'$ such that $\b{\Psi} = \b{\Phi}[\b{F},\b{K}']$. With this notion of realizable CLMs of an open-loop dynamics, we define the space of all realizable CLMs:
\begin{defn}[Space of Realizable CLMs]
    Given an open-loop dynamics \eqref{eq:open-loop}, the set of all feasible closed loop maps $\feas \in \mathcal{C}(\ell^n,\ell^n \times \ell^m)$ for open loop \eqref{eq:open-loop} is defined as:
    $$\feas:= \{\b{\Psi}| \exists \b{K} \in \causal{n}{m} \text{ s.t. }\b{\Psi}= \b{\Phi}[\b{F},\b{K}] \}.$$
\end{defn}
The following theorem characterizes the space of realizable CLMs for a given open loop:

\begin{thm}[Characterization of CLMs \cite{NLSLSHODoyle2019}]
    \label{thm:sufnec}
    A composite operator $\b{\Psi}=(\CLM{x},\CLM{u}) \in \mathcal{C}(\ell^n,\ell^n \times \ell^m)$ are realizable CLMs of the open loop \eqref{eq:open-loop} if and only if they satisfy the operator equation
\begin{align}
\label{eq:feasnl} \CLM{x} &= \b{F}(\b{\Psi}) + \b{I}.
\end{align}
Moreover, for any operators $\CLM{}$ satisfying \eqref{eq:feasnl}, the inverse $(\CLM{x})^{-1}$ exists, is a causal operator, and $\b{K} = \CLM{u}(\CLM{x})^{-1}$ is a realizing controller for CLMs $\CLM{}$. If $\CLM{u}$ is surjective, then $\b{K}$ is unique.
\end{thm}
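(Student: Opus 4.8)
The plan is to prove the two directions of the equivalence separately, and then to read off the ``moreover'' claims from an explicit, recursive construction of the inverse $(\CLM{x})^{-1}$. The necessity direction should be essentially a restatement of Definition~\ref{def:clm}: if $\b{\Psi} = \b{\Phi}[\b{F},\b{K}']$ for some causal $\b{K}'$, then for every $\b{w}\in\ell^n$ the pair $(\b{x},\b{u}) := \b{\Psi}(\b{w})$ is by definition the closed-loop trajectory of \eqref{eq:dyna}, so $\b{x} = \b{F}(\b{x},\b{u}) + \b{w} = \b{F}(\b{\Psi}(\b{w})) + \b{w}$; since $\b{w}$ is arbitrary, this is exactly the operator identity \eqref{eq:feasnl}.

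For sufficiency, assume $\b{\Psi}=(\CLM{x},\CLM{u})$ satisfies \eqref{eq:feasnl}. The crux --- and the step I expect to be the only real obstacle --- is to show that \eqref{eq:feasnl} forces $\CLM{x}$ to have a causal inverse. Here I would use that $\b{F}$ is \emph{strictly} causal: writing \eqref{eq:feasnl} componentwise, with $F_0\equiv 0$ and $F_t(\b{x},\b{u}) = f(x_{t-1},u_{t-1})$ for $t\ge 1$, gives $\clm{x}{0}(w_0) = w_0$ and, for $t\ge 1$,
\[
\clm{x}{t}(w_{t:0}) \;=\; w_t + f\!\big(\clm{x}{t-1}(w_{t-1:0}),\,\clm{u}{t-1}(w_{t-1:0})\big),
\]
so the $t$-th output of $\CLM{x}$ equals $w_t$ plus a term depending only on $w_0,\dots,w_{t-1}$. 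This ``strictly-lower-triangular-plus-identity'' structure in $\b{w}$ lets me invert $\CLM{x}$ by a forward recursion: set $\hat w_0 := x_0$ and, for $t\ge1$, $\hat w_t := x_t - f\big(\clm{x}{t-1}(\hat w_{t-1:0}),\,\clm{u}{t-1}(\hat w_{t-1:0})\big)$, which exhibits $\hat w_t$ as a function of $x_0,\dots,x_t$ only; hence $\b{x}\mapsto\hat{\b{w}}$ is a well-defined causal operator, and a routine induction on $t$ verifies it is a genuine two-sided inverse of $\CLM{x}$. In particular $\CLM{x}$ is a bijection of $\ell^n$. The role of strict causality is precisely to rule out an algebraic loop at time $t$ and thereby make this recursion well-posed.

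With the causal inverse in hand, I would set $\b{K} := \CLM{u}(\CLM{x})^{-1}$ (causal, as a composition of causal operators) and check it realizes $\b{\Psi}$: for fixed $\b{w}$, the pair $\tilde{\b{x}}:=\CLM{x}(\b{w})$, $\tilde{\b{u}}:=\CLM{u}(\b{w})$ satisfies $\tilde{\b{x}} = \b{F}(\tilde{\b{x}},\tilde{\b{u}})+\b{w}$ by \eqref{eq:feasnl} and $\b{K}(\tilde{\b{x}}) = \CLM{u}(\CLM{x})^{-1}\CLM{x}(\b{w}) = \tilde{\b{u}}$ by construction, so $(\tilde{\b{x}},\tilde{\b{u}})$ solves \eqref{eq:dyna} under $\b{K}$ for input $\b{w}$; since a strictly causal $\b{F}$ together with a causal $\b{K}$ generate a unique such trajectory (solved step-by-step as $x_0=w_0$, $u_t = K_t(x_{t:0})$, $x_{t+1} = f(x_t,u_t)+w_{t+1}$), we get $\b{\Phi}[\b{F},\b{K}](\b{w}) = (\tilde{\b{x}},\tilde{\b{u}})$ for all $\b{w}$, i.e.\ $\b{\Psi} = \b{\Phi}[\b{F},\b{K}]$. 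Finally, for uniqueness: any realizing $\b{K}'$ must satisfy $\b{K}'\CLM{x} = \CLM{u} = \b{K}\CLM{x}$, and since $\CLM{x}$ is onto $\ell^n$ this yields $\b{K}' = \b{K}$; I expect the surjectivity hypothesis on $\CLM{u}$ in the statement to be relevant only if one additionally insists that the controller's output map be exactly $\CLM{u}$ with no redundancy, the bijectivity of $\CLM{x}$ already sufficing for operator-level uniqueness.
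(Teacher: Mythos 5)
Your proof is correct, and since the paper imports this theorem from \cite{NLSLSHODoyle2019} without reproducing an argument, there is nothing here to diverge from: your forward recursion $\hat w_t = x_t - f\big(\clm{x}{t-1}(\hat w_{t-1:0}),\clm{u}{t-1}(\hat w_{t-1:0})\big) = x_t - \clm{x}{t}(0,\hat w_{t-1:0})$ is exactly the construction of $(\CLM{x})^{-1}$ that underlies the SL implementation in Definition~\ref{def:SL}, i.e., the standard argument of the cited reference. Your closing observation is also apt: in this state-feedback setting the two-sided invertibility of $\CLM{x}$ on $\ell^n$ already forces any realizing $\b{K}'$ to agree with $\CLM{u}(\CLM{x})^{-1}$ everywhere, so the surjectivity hypothesis on $\CLM{u}$ is not needed for the uniqueness conclusion as stated here.
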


%%%
\subsection{System Level Implementations}
Aside from the technical assumption on the codomain of $\CLM{u}$, \thmref{thm:sufnec} states that there is a one-to-one relation between CLMs $(\CLM{x},\CLM{u})$ and their realizing controllers $\b{K} = \CLM{u}(\CLM{x})^{-1}$. Nevertheless, different implementations of $\b{K}$ need to be distinguished: despite realizing the same CLMs with respect to the trajectory $(\b{w},\b{x},\b{u})$, they do not give the same closed loop behavior once we add additional perturbations to the system. We will denote the following realization of $\b{K} = \CLM{u}(\CLM{x})^{-1}$ as the \textit{System Level} (SL)-implementation of $\b{K}$:
\begin{defn}[SL Implementation]\label{def:SL}
Given a composite operator $\b{\Psi}=(\CLM{x},\CLM{u}) \in \mathcal{C}(\ell^n,\ell^n \times \ell^m)$ satisfying \eqref{eq:feasnl}, the realizing controller $\b{K} = \CLM{u}(\CLM{x})^{-1}$ can be implemented as follows :
%\begin{align}
%u_t = K_t(x_{t:0}) = \Tu{t}(0,\hat{w}_{t:1})
%\end{align}
%where $\hat{w}_t$ is computed through the recursion
%\begin{subequations}
%\label{eq:SLeq}
%\begin{align}
%\hat{w}_0 &= x_0\\
%\hat{w}_1 &= x_1-\Tx{1}(0,0)\\
%&\dots \nonumber\\
%\hat{w}_t &= x_t-\Tx{t}(0,0,\hat{w}_{t-1:1})
%\end{align}
%\end{subequations}
\begin{subequations}
	\label{eq:SLeq}
    \begin{align}
	    u_t &= \Tu{t}(\hat{w}_{t:0})\\
	    \hat{w}_{t+1} &= x_{t+1}-\Tx{t+1}(0,\hat{w}_{t:0})
	\end{align}
\end{subequations}
for $t = 0,1,\dots$, where $\hat{w}$ denotes the internal state of the controller with initial condition $\hat{w}_0 = x_0$. We will write $\b{K}=\SL{\TX}{\TU}$ to underscore that the controller $\b{K}=  \CLM{u}(\CLM{x})^{-1}$ is implemented in this fashion.
\end{defn}

Consider the closed loop of \eqref{eq:sys} and controller $\b{K}=\SL{\TX}{\TU}$ perturbed by additional noise $\b{v}$ and input disturbance $\b{d}$ such that:
\begin{subequations}
\label{eq:sysperturb} 
\begin{align}
    	x_{t} &= f(x_{t-1},u_{t-1})+w_t\\
	    u_t &= \Tu{t}(\hat{w}_{t:0})+d_t\\
	    \hat{w}_t &= x_t-\Tx{t}(0,\hat{w}_{t-1:1}) + v_t.
\end{align}
\end{subequations}
 \begin{thm}[Internal Stability of Closed Loop\cite{NLSLSHODoyle2019}]\label{thm:stability}
If $f$ is uniformly continuous and the operator $(\CLM{x},\CLM{u})$  is $\ell^{n+m}_p$-stable (or $\ell^{n+m}_p$ finite gain-stable) CLMs of \eqref{eq:sys} , then the closed loop dynamics \eqref{eq:sysperturb} are $\ell^{n+m}_p$-stable (or finite gain $\ell^{n+m}_p$-stable) with respect to the perturbation $(\b{w},\b{d},\b{v})$. 
% Moreover, if $\CLM{x},\CLM{u}$ are \textit{not} CLMs of the system \eqref{eq:sys}, then the previous stability guarantee still holds true if and only if the internal dynamics \label{eq:clint} are $\ell_p$-stable/finite gain $\ell_p$-stable. 
\end{thm}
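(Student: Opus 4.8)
\medskip
\noindent \textbf{Proof plan.}
The plan is to show that, after a change of variables, the perturbed interconnection \eqref{eq:sysperturb} is \emph{exactly} the nominal closed loop maps $(\CLM{x},\CLM{u})$ driven by a perturbed copy of the controller's internal state, and then to let $\ell_p$-stability pass from $(\CLM{x},\CLM{u})$ to the perturbed loop. First I would record that \eqref{eq:sysperturb} is well posed: the update of $\hat{w}_t$ uses only $x_t$ and $\hat{w}_{t-1:0}$, that of $u_t$ only $\hat{w}_{t:0}$, and that of $x_t$ only $x_{t-1},u_{t-1}$, so there is no algebraic loop and $(\b{x},\b{u},\hat{\b{w}})$ is determined uniquely from $(\b{w},\b{d},\b{v})$ by propagating $\hat{w}_0\to u_0\to x_1\to\hat{w}_1\to\cdots$.

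The key algebraic step is to exploit \thmref{thm:sufnec}, i.e.\ $\CLM{x}=\b{F}(\b{\Psi})+\b{I}$ with $\b{F}$ strictly causal. This gives, for every sequence $\b{a}$, the component-wise identities $\Tx{t}(a_{t:0})=F_t(\b{\Psi}(\b{a}))+a_t$ and $\Tx{t}(0,a_{t-1:0})=F_t(\b{\Psi}(\b{a}))$, hence $\Tx{t}(0,\hat{w}_{t-1:0})=\Tx{t}(\hat{w}_{t:0})-\hat{w}_t$. Substituting this into the internal-state equation of \eqref{eq:sysperturb} collapses it to $x_t=\Tx{t}(\hat{w}_{t:0})-v_t$, while the input equation reads $u_t=\Tu{t}(\hat{w}_{t:0})+d_t$; that is,
\[
\b{x}=\CLM{x}(\hat{\b{w}})-\b{v}, \qquad \b{u}=\CLM{u}(\hat{\b{w}})+\b{d}.
\]
Feeding these two identities back into the plant equation $x_t=f(x_{t-1},u_{t-1})+w_t$ and again using $\Tx{t}(\hat{w}_{t:0})=f\bigl(\clm{x}{t-1}(\hat{w}_{t-1:0}),\clm{u}{t-1}(\hat{w}_{t-1:0})\bigr)+\hat{w}_t$ isolates a strictly causal recursion for the internal state:
\[
\hat{w}_t = w_t+v_t+\delta_t, \qquad \delta_t:=f\bigl(\clm{x}{t-1}(\hat{w}_{t-1:0})-v_{t-1},\,\clm{u}{t-1}(\hat{w}_{t-1:0})+d_{t-1}\bigr)-f\bigl(\clm{x}{t-1}(\hat{w}_{t-1:0}),\,\clm{u}{t-1}(\hat{w}_{t-1:0})\bigr),
\]
with $\hat{w}_0=w_0+v_0$ and $\delta_0=0$. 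Since $\delta_t$ depends on $\hat{\b{w}}$ only through $\hat{w}_{t-1:0}$, this recursion has a unique solution, which is precisely the $\hat{\b{w}}$ of the perturbed loop.

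It remains to prove $\hat{\b{w}}\in\ell^n_p$ whenever $(\b{w},\b{d},\b{v})$ has finite $\ell_p$-norm, and this is the step I expect to carry the weight. The two arguments of $f$ inside $\delta_t$ differ by exactly $(-v_{t-1},d_{t-1})$, so uniform continuity of $f$ gives $|\delta_t|\le\omega_f(|v_{t-1}|+|d_{t-1}|)$, where $\omega_f$ is the (optimal, subadditive) modulus of continuity of $f$; consequently $\|\b{\delta}\|_p$ is finite and vanishes with $\|\b{v}\|_p,\|\b{d}\|_p$, and the recursion yields $\|\hat{\b{w}}\|_p\le\|\b{w}\|_p+\|\b{v}\|_p+\|\b{\delta}\|_p<\infty$. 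Then $\ell_p$-stability of $(\CLM{x},\CLM{u})$ forces $\CLM{x}(\hat{\b{w}})\in\ell^n_p$ and $\CLM{u}(\hat{\b{w}})\in\ell^m_p$, and the two identities above immediately give $\b{x}\in\ell^n_p$, $\b{u}\in\ell^m_p$. For the finite-gain version one runs the same chain of inequalities while keeping the affine constants, which additionally requires controlling the growth of $\omega_f$ so that the composite gain stays finite.

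The crux, as indicated, is the $\ell_p$ bound on $\hat{\b{w}}$ through the modulus of continuity: it is the only point at which the regularity hypothesis on $f$ is used, it is what distinguishes this argument from the purely linear SLS robustness proof, and in the finite-gain regime it is what makes the growth of $\omega_f$ matter.
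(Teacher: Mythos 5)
The paper itself contains no proof of this theorem --- it is imported verbatim from \cite{NLSLSHODoyle2019} --- so there is nothing in-text to compare against; your reduction is the standard argument behind the cited result. The algebra is right: using $\CLM{x}=\b{F}(\b{\Psi})+\b{I}$ and strict causality of $\b{F}$ to get $\Tx{t}(0,\hat{w}_{t-1:0})=\Tx{t}(\hat{w}_{t:0})-\hat{w}_t$, hence $x_t=\Tx{t}(\hat{w}_{t:0})-v_t$ and $u_t=\Tu{t}(\hat{w}_{t:0})+d_t$, and then collapsing the loop to the strictly causal recursion $\hat{w}_t=w_t+v_t+\delta_t$ with $\delta_t$ the increment of $f$ over the offset $(-v_{t-1},d_{t-1})$. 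The well-posedness remark is also correct, and for $p=\infty$ the argument closes cleanly: $|\delta_t|\le\omega_f(|v_{t-1}|+|d_{t-1}|)\le\omega_f(\|\b{v}\|_\infty+\|\b{d}\|_\infty)$, so $\hat{\b{w}}\in\ell_\infty^n$ and stability of $(\CLM{x},\CLM{u})$ transports to $(\b{x},\b{u})$.

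The one genuine gap is the sentence ``consequently $\|\b{\delta}\|_p$ is finite.'' For $1\le p<\infty$, uniform continuity of $f$ does \emph{not} imply $\sum_t\omega_f(|v_{t-1}|+|d_{t-1}|)^p<\infty$ for $\b{v},\b{d}\in\ell_p$, because the modulus may decay sublinearly at the origin. Concretely, take $n=m=1$, $f(x,u)=g(u)$ with $g(u)=\mathrm{sign}(u)\sqrt{|u|}$ for $|u|\le 1$ and $g(u)=\mathrm{sign}(u)$ otherwise: $g$ is uniformly continuous with $g(0)=0$, the zero controller has CLMs $\CLM{x}=\b{I}$, $\CLM{u}=0$, which are $\ell_1$-stable, yet the input perturbation $d_t=t^{-3/2}\in\ell_1$ (with $\b{w}=\b{v}=0$) yields $x_{t+1}=g(d_t)\sim t^{-3/4}\notin\ell_1$. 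So that step fails --- and, read literally, so does the general-$p$ statement --- unless one additionally has $\omega_f(s)\le Ls$ for small $s$ (a Lipschitz-type condition), in which case $\|\b{\delta}\|_p\le L(\|\b{v}\|_p+\|\b{d}\|_p)$ and the rest of your chain of inequalities goes through, including the finite-gain version. Note that you flag the behavior of $\omega_f$ only for the finite-gain variant, but it is equally load-bearing for the plain $\ell_p$-stability claim when $p<\infty$. Since every use of this theorem in the present paper is for $p=\infty$ (the small-gain argument of Lemma VI.1), your proof covers all the cases that actually matter here, but as a proof of the theorem as stated it needs the stronger regularity hypothesis made explicit.
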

%
%A particular subclass of stable operators that we will consider in this work are operators of finite memory, which are defined as follows:
%\begin{defn}[CLM with Finite Memory]\label{def:FM}
%CLM $(\CLM{x},\CLM{u})$ are said to have finite memory of horizon $T$ if
%\begin{equation}
%    \clm{x}{t+1}(0,w_{t:0}) =
%    \begin{cases}
%    \clm{x}{t+1}(0,w_{t:0}) & \forall t < T \\
%        \clm{x}{t+1}(0,w_{t:t+1-T},0,\dots,0) & \forall t \geq T
%    \end{cases}
%\end{equation}
%
%\end{defn}
\subsection{Relation to Linear System Level Approach}
If we restrict the previous analysis to linear time-invariant (LTI) systems and controllers, we recover the results of \cite{anderson2019system} for the state-feedback case.
If the open-loop dynamics now is $x_{t} = Ax_{t-1} + Bu_{t-1} + w_t$ and $\b{K}$ is an LTI operator, then the corresponding linear CLMs are LTI as well, whose component functions can be written as:
\begin{subequations}
\label{eq:linear-maps}
\begin{align}
    \clm{x}{t}(w_{t:0}) & = \sum^{t+1}_{k=1}R_k w_{t+1-k}\\
    \clm{u}{t}(w_{t:0}) & = \sum^{t+1}_{k=1}M_k w_{t+1-k},
\end{align}
\end{subequations}
where $R_k \in \RR^{n \times n}$, $M_k \in \RR^{m \times n}$ for $k = 1,2,\dots,t+1$ are called associated matrices of the component functions $\Psi^x_t$, $\Psi^u_t$ of $\CLM{x}$ and $\CLM{u}$. Moreover, the corresponding CLMs characterization condition \eqref{eq:feasnl} reduces to the affine constraint on the matrices $R_k$, $M_k$ which coincides with the linear System Level Synthesis (SLS) feasibility conditions derived in \cite{anderson2019system}. In particular, if we further 
restrict $\CLM{}$ to have Finite Impulse Response (FIR) with horizon $T$, \textit{i.e.}, component functions $\Psi^x_t$ and $\Psi^u_t$ only depend on the past $\min\{T,t+1\}$ inputs, then \eqref{eq:linear-maps} becomes:
\begin{subequations}
    \begin{align}
        \clm{x}{t}(w_{t:0}) & = \sum^{\min\{t+1,T\}}_{k=1}R_k w_{t+1-k}\\
        \clm{u}{t}(w_{t:0}) & = \sum^{\min\{t+1,T\}}_{k=1}M_k w_{t+1-k},
    \end{align}
    \end{subequations}
The CLMs characterization \eqref{eq:feasnl} in this the FIR LTI case reduces to the following conditions on the associated matrices $R_k$, $M_k$ for $k = 1,\dots,T-1$: 
\begin{subequations}
\label{eq:lin-feaibility}
\begin{align}
    &R_{1} = I \\
    &R_{k+1} = A R_k + B M_k \\
    &A R_{T} + B M_{T} = 0.
\end{align}
\end{subequations}
and $\b{K} = \SL{\CLM{x}}{\CLM{u}}$ results in the implementation below, which also coincides with that of \cite{anderson2019system}:
\begin{align*}
    u_t &= \sum^{\min\{t+1,T\}}_{k=1}M_{k}\hat{w}_{t+1-k}    \\
    \hat{w}_{t+1} &= x_{t+1} - \sum^{\min\{t+2,T\}}_{k=2}R_{k}\hat{w}_{t+2-k},
\end{align*}
for all $k = 0,1,\dots$ with $\hat{w}_0 = x_0$.
% since by definition of CLMs, we have the following closed-loop behaviour:
%\begin{align}
%\label{eq:lin-controller}
%&x_{t} = \sum^{\min\{t,T\}}_{k=1}R_{k}w_{t-k},
%&u_{t} = \sum^{\min\{t,T\}}_{k=1}M_{k}w_{t-k}
%\end{align} 
%Recall from \cite{slsacc}, that linear CLMs $\CLM{x}$, $\CLM{u}$ are called finite impulse response(FIR) CLM of horizon $T$, if $R_T$, $M_T$ satisfy $$. The corresponding closed loop is then trivially stable since the closed loop has a finite impulse response and enforces the following CLM between the disturbance $\b{w}$ and $\b{x}$, $\b{u}$: 

\section{Nonlinear Blending of Linear System Level Controllers}
\label{sec:blend}

As introduced in the previous section, system level controllers defined in \defref{def:SL} can implement arbitrary CLMs for nonlinear systems of the form \eqref{eq:sys}. The results in \cite{NLSLSHODoyle2019} motivate a new approach for nonlinear control synthesis: Searching for stable operators $\TX$, $\TU$ that satisfy \eqref{eq:feasnl} and constructing a corresponding system level controller $\SL{\TX}{\TU}$ by \defref{def:SL}.

It is conceivable that the generality of this approach could lead to an entirely new direction of nonlinear dynamic control methods. Serving as a first step towards exploring the potential of this new perspective, the remainder of this paper focuses on a subset of nonlinear system level controllers $\SL{\TX}{\TU}$ that proves particularly useful for controlling large-scale linear systems subject to state/input constraints and input saturation.

%In particular we will restrict ourselves to the class of controllers $\SL{\TX}{\TU}$ where $\TX$ and $\TU$ are structured as
%
%Here $G^{i}, i\in \set{N}$ are to-be-designed nonlinear functions.  $R^{(i)}_k \in \RR^{n\times n}$, $M^{(i)}_k \in \RR^{m\times n}$ are matrices associated with linear FIR CLM $\CLM{x,i}$,$\CLM{u,i}$, $i \in \set{N}$ of horizon $T$ for a linear system of interests: 
%\begin{align} \label{eq:syslin}x_{k+1} =Ax_k + Bu_k + w_k \end{align}
%such that for each $i \in \set{N}$, $\CLM{x,i}$,$\CLM{u,i}$ satisfies \eqref{eq:lin-feaibility}.  The dynamic controller \eqref{eq:SLeq} that achieves \eqref{eq:n-blend} has the following explicit form:
%\begin{subequations}
%	\begin{align}
%	\label{eq:nl-controller-1}
%	\hat{w}_t &= x_t - \sum^{N}_{i=1}\sum^{T-1}_{k=1}R^{i}_{k+1}G^i(\hat{w}_{t-k})\\
%	\label{eq:nl-controller-2} u_t &= \sum^{N}_{i=1}\sum^{T-1}_{k=0}M^{i}_{k+1}G^{i}(\hat{w}_{t-k}).
%	\end{align}
%\end{subequations}
%\todo[inline]{The internal dynamics part is missing... you say the dynamic controller achieves the CLMs but that's not true, because if you don't choose $\Sigma_i G^i = I$, then 14a and 14b are not realizable CLM's only approximate ones..}

In particular we will restrict ourselves to the class of controllers $\SL{\TX}{\TU}$ where $\TX$ and $\TU$ are structured as
    \begin{align}
        \label{eq:n-blend}
	% \clm{x}{t}(w_{t-1:t-T}) &= \sum^{N}_{i=1} \sum^{\min\{T,t\}}_{k=1}R^{(i)}_{k} (\p{i}{w_{t-k}}-\p{i-1}{w_{t-k}} )\\
	\clm{x}{t}(\cdot) &= \sum^{N}_{i=1} \sum^{\min\{T,t+1\}}_{k=1}R^{(i)}_{k}  (\pp{i}-\pp{i-1})(w_{t+1-k})\nonumber\\
	\clm{u}{t}(\cdot) &= \sum^{N}_{i=1} \sum^{\min\{T,t+1\}}_{k=1}M^{(i)}_{k}(\pp{i}-\pp{i-1})(w_{t+1-k}).
    \end{align}
% \begin{subequations}
% 	\label{eq:n-blend}
% 	\begin{align}
% 	% \clm{x}{t}(w_{t-1:t-T}) &= \sum^{N}_{i=1} \sum^{\min\{T,t\}}_{k=1}R^{(i)}_{k} (\p{i}{w_{t-k}}-\p{i-1}{w_{t-k}} )\\
% 	\clm{x}{t}(\cdot) &= \sum^{N}_{i=1} \sum^{\min\{T,t+1\}}_{k=1}R^{(i)}_{k}  (\pp{i}-\pp{i-1})(w_{t+1-k})\\
% 	\clm{u}{t}(\cdot) &= \sum^{N}_{i=1} \sum^{\min\{T,t+1\}}_{k=1}M^{(i)}_{k}(\pp{i}-\pp{i-1})(w_{t+1-k})
% 	\end{align}
% \end{subequations}
We choose $\eta_N\geq\eta_{N-1}\geq \dots \geq \eta_0 =  0$ and  the operator $\p{i}{\cdot}:\RR^{n} \rightarrow \RR^n$ as any nonlinear function with a projection-like property defined for parameter $\eta_i$. $R^{(i)}_k \in \RR^{n\times n}$, $M^{(i)}_k \in \RR^{m\times n}$ are matrices associated with linear FIR CLMs $\CLM{x,i}$,$\CLM{u,i}$, $i \in \set{N}$ with FIR horizon $T$ for a linear system of interests: 
\begin{align} \label{eq:syslin}x_{t} =Ax_{t-1} + Bu_{t-1} + w_{t}, \end{align}
with $x_t \in \RR^n$,$w_t \in \RR^n$, $u \in \RR^m$ such that for each $i \in \set{N}$, $\CLM{x,i}$,$\CLM{u,i}$ satisfies \eqref{eq:lin-feaibility}.
 Concretely, we consider two specific nonlinear projections:
 
 \begin{defn}[Saturation Projection]\label{def:sat}
     Let vector $w = [w^1, \dots, w^n]^T\in \RR^n$. The saturation projection is an element-wise projection:
     \begin{align}
 	\p{}{w}:= \begin{bmatrix}
 	\sat(w^1,\eta)\\ \vdots\\
 		\sat(w^n,\eta)
 	\end{bmatrix}
 	\end{align}
 	where $\sat(w,\eta) = \text{sign}(w) \max\{|w|,\eta\}$.
 \end{defn}

\begin{defn}[Radial Projection] \label{def:radial}
The radial projection is defined as:\begin{align}
    \p{}{w}:= \frac{\sat(|w|/\eta,1)}{|w|/\eta} w
\end{align}
\end{defn}
Unless otherwise specified, the results derived in the rest of the paper hold for both projections.
\begin{rem}
For $n=1$, radial projection and saturation projection coincide with each other. The radial and saturation projection operator act as the identity whenever $|w|\leq \eta$. Otherwise, the radial projection rescales $w$ such that $|\p{}{w}|= \eta$ whereas the saturation projection performs element-wise radial projection.
\end{rem}

The proposed nonlinear controller $\SL{\CLM{x}}{\CLM{u}}$ can be thought of as a \textit{nonlinear blend} of the \textit{linear} FIR controllers $\SL{\CLM{x,i}}{\CLM{u,i}}$, $i \in \set{N}$. Although the nonlinear operator $\CLM{x}$, $\CLM{u}$ differs from its linear components $\CLM{x,i}$, $\CLM{u,i}$ only by the static nonlinear function $\p{i}{w}$, the upcoming sections will demonstrate that this simple additional nonlinearity proves surprisingly useful. In particular, $\eta_i$'s separate any disturbance $w_t$ into $N$ zones such that for each $i$th linear controller $\SL{\CLM{x,i}}{\CLM{u,i}}$, only the portion of $w_t$ that "falls" between $\eta_i$ and $\eta_{i-1}$ is acted upon. Intuitively, one could choose different behaviors for various portions of the disturbance signal, specifying either performance or safety properties. The explicit expression of the dynamic controller $\SL{\TX}{\TU}$ with CLMs defined in \eqref{eq:n-blend} is:
\begin{align}
    u_t &= \sum^{N}_{i=1} \sum^{\min\{T,t+1\}}_{k=1}M^{(i)}_{k}(\pp{i}-\pp{i-1})(\hat{w}_{t+1-k})\nonumber\\
    \hat{w}_{t+1} &= x_{t+1} - \sum^{N}_{i=1} \sum^{\min\{T,t+2\}}_{k=2}R^{(i)}_{k}  (\pp{i}-\pp{i-1})(\hat{w}_{t+2-k}), \nonumber
\end{align}
with $k = 0,1,\dots$, and $\hat{w}_0 = x_0$.

For ease of exposition, we focus on the two-zone case of the proposed controller $\SL{\CLM{x}}{\CLM{u}}$ though all the analysis extends naturally to the $N$-zone case. 
Thus, \eqref{eq:n-blend} simplify to
\begin{align}
    \clm{x}{t}(w_{t:0}) = \sum^{\min\{T,t+1\}}_{k=1} &\R1{k} \p{1}{w_{t+1-k}} + \nonumber \\
    & \R2{k}(\p2{w_{t+1-k}} - \p1{w_{t+1-k}}) \nonumber \\
    \clm{u}{t}(w_{t:0}) = \sum^{\min\{T,t+1\}}_{k=1} &\M1{k} \p{1}{w_{t+1-k}} +\nonumber \\
    & \M2{k}(\p2{w_{t+1-k}} - \p1{w_{t+1-k}}).\label{eq:NL-map}
\end{align}
% The operators $\TX$, $\TU$ are nonlinear due to the nonlinearity introduced by the projection $\p{}{w}$. Nevertheless, by plugging equation \eqref{eq:NL-map} into the general CLM conditions of \thmref{thm:sufnec}, it can be verified that $\TX$, $\TU$ are nonlinear CLMs of the linear system \eqref{eq:syslin} for small disturbances. This is summarized in the next lemma:
% \begin{lem}
% If $\{\CLM{x,i}$,$\CLM{u,i}\}_{i = 1,2}$ are CLM for \eqref{eq:syslin}, then the nonlinear operators $\TX$, $\TU$ defined in \eqref{eq:NL-map} are CLMs of \eqref{eq:syslin} for all $\b{w}$ that satisfy $\sup_k \|w_k\|_{\infty} \leq \eta_2$.
% \end{lem}
Note that system level controller $\SL{\CLM{x}}{\CLM{u}}$ of the two-zone CLM is internally stabilizing and achieves the two-zone CLM behavior for \eqref{eq:syslin}  as long as $\|\b{w}\|_{\infty} \leq \eta_2$. 
% The closed loop dynamics simplify to:
% \begin{subequations}
% \label{eq:cleq_blend}
% \begin{align}
%   \label{eq:clint_blend} 
%   \hat{w}_{t+1} &= A(\hat{w}_t-\p{2}{\hat{w}_t}) + w_t\\
%     \label{eq:clx_blend}    
%     x_t &= \Tx{t}\left(0,0,\hat{w}_{t-1:1}\right)+\hat{w}_{t}\\
%     \label{eq:clu_blend}u_t &= \Tu{t}\left(0,\hat{w}_{t:1}\right)
% \end{align}
% \end{subequations}

% \begin{rem}
% The blending of \textit{two} linear CLM in \eqref{eq:NL-map} can be naturally extended to \textit{$N$} linear CLM by extending \eqref{eq:NL-map} to the form:
% \begin{align}\label{eq:blend}
% \clm{x}{t}(w_{t-1:t-T}) &= \sum^{N}_{i=1} \sum^{\min\{T,t\}}_{k=1}R^{(i)}_{k} \p{i}{w_{t-k}} \\
% \clm{u}{t}(w_{t-1:t-T}) &= \sum^{N}_{i=1} \sum^{\min\{T,t\}}_{k=1}R^{(i)}_{k} \p{i}{w_{t-k}} 
% \end{align}
% where $R^{(i)}_{k}$, $M^{(i)}_{k}$ are matrices associated with $N$ linear FIR CLM $\CLM{x,i}$, $\CLM{u,i}$.
% \end{rem}
 In the remainder of this paper we will explore the consequence of this blending technique for distributed control design with respect to input saturation and state constraints in linear systems. we show that the simple nonlinearity in \eqref{eq:NL-map} offers a variety of advantages over linear controllers.

\section{A General Framework for Constrained LQR}
\label{sec:lqr}
We present a novel synthesis procedure for a class of constrained LQR problems using the proposed SL controller with CLMs \eqref{eq:NL-map}. In particular, we will show that the synthesized  nonlinear blending system level controller is guaranteed to outperform any linear controller for the class of constrained LQR problems to be discussed. Additionally, we comment on how structural constraints for large-scale systems such as delay, actuation sparsity, and localization can be easily accommodated. 

Consider a control problem where we wish to minimize an \textit{average} LQR cost, but also want that the closed loop meets certain safety guarantees against a set of rare yet possible \textit{worst-case} disturbances. Ideally, we would like to synthesize a controller that can guarantee the necessary safety constraints without too much loss in performance compared to the unconstrained LQR controller. We will phrase this design goal as the following constrained LQR problem:
\begin{subequations}
\label{eq:const-lqr}
\begin{align}
\label{eq:cost-lqr}&\min_{\b{K}}\lim_{T \rightarrow \infty}\frac1T \sum_{t=1}^T \mathbb{E}_{w_t^i \sim p(w)}[\mathcal{J}(x_t,u_t) ] \\
&s.t.\,\,\,\,\,\,\,\,\,\,\,\, x_{t} = Ax_{t-1}+Bu_{t-1} + w_{t}\label{eq:lin-dyn}\\
 &\qquad  \,\,\,\,\,\,\,\,\,\,\,\,\,\,\,u_t = K_t(x_{t:0})\\
 \label{eq:XUconstlqr}& \qquad \forall \b{w}: ||\b{w}||_{\infty}\leq \eta_{max}:  \\
&\,\,\,\,\,\,\,\,\,\,\,\,\,\qquad  \sup_k |x_k|\leq x_{max}, \quad \sup_k |u_k| \leq u_{max}  \nonumber
% \\
%\label{eq:Uconstlqr}&\,\,\,\,\,\,\,\,\,\,\,\,\, \qquad  \sup_k ||u_k||_{\infty}\leq u_{max} 
\end{align}
\end{subequations}
where $\mathcal{J}$ abbreviates the quadratic stage cost $\mathcal{J}(x,u) = x^TQx+uPu$ with $Q$,$P \succ 0$. We will assume that the disturbance is stochastic but bounded such that $\|\b{w}\|_{\infty} \leq \eta_{max}$ with known distribution which satisfies the following
\begin{asp}
\label{assum:distribution}
Disturbance $w_t^i$ are i.i.d. drawn from the scalar centered distribution $p(w)$ and uncorrelated in time $t$ and coordinate $i$.
%$$ \E [w_t^i w_{t'}^j] = 0  \quad \forall i\not = j, t\not = t'$$
% In addition, we assume a bound on $w_k^i$, $\forall k,i$ such that $||w_k^i||\leq \eta_2$.
\end{asp}

We can equivalently phrase the optimal control problem \eqref{eq:const-lqr} in terms of closed loop maps as defined in \secref{sec:nlsls}. Recalling \defref{def:clm}, the optimal control problem \eqref{eq:const-lqr} can be described as an optimization over the set of feasible CLMs $(\CLM{x},\CLM{u})\in \overline{\b{\Phi}}(\b{A}\b{x}+\b{B}\b{u})$ and by using the characterization \thmref{thm:sufnec} we obtain:
\begin{subequations}
\label{eq:const-lqrclm}
\begin{align}
\label{eq:cost-lqrclm-cost2}&\min_{\CLM{x},\CLM{u}}\lim_{T \rightarrow \infty}\frac1T \sum_{t=1}^T \mathbb{E}[\mathcal{J}(\clm{x}{t}(w_{t:0}),\clm{u}{t}(w_{t:0})) ]
 \\
\label{eq:const-lqrclm-constclm2} &s.t. \,\,\, \,\,\,\,\,\,\,\, \clm{x}{t}(w_{t:0}) = \clm{x}{t}(0,w_{t-1:0}) + w_{t} \\
\nonumber&\,\,\,\,\,\,\,\,\,\clm{x}{t+1}(0,w_{t:0}) = A\clm{x}{t}(w_{t:0})+B\clm{u}{t}(w_{t:0})\\
 \label{eq:const-lqrclm-constx2} &\qquad \forall t, |w_t| \leq \eta_{max}:\quad |\clm{x}{t}(w_{t:0})|\leq x_{max} \\
  \label{eq:const-lqrclm-constu2} &\qquad \forall t, |w_t| \leq \eta_{max}:\quad |\clm{u}{t}(w_{t:0})| \leq u_{max} 
\end{align}
\end{subequations}
% \begin{subequations}
% \label{eq:const-lqrclm}
% \begin{align}
% \label{eq:cost-lqrclm-cost}&\min_{\b{K}}\lim_{T \rightarrow \infty}\frac1T \sum_{t=1}^T \mathbb{E}_{w_t^i \sim p(w)}[\mathcal{J}(\clm{x}{t}(w_{t:0}),\clm{u}{t}(w_{t:0})) ] \\
% \label{eq:const-lqrclm-constclm} &s.t.\,\,\, \CLM{x} = \CLMofFK{x}{Ax+Bu}{\b{K}},\quad \CLM{u} = \CLMofFK{u}{Ax+Bu}{\b{K}}\\
%  \label{eq:const-lqrclm-constx} &\qquad \forall t, \|w_t\|_{\infty} \leq \eta_{2}:\quad \|\clm{x}{t}(w_{t:0})\|_{\infty} \leq x_{\mathrm{max}} \\
%   \label{eq:const-lqrclm-constu} &\qquad \forall t, \|w_t\|_{\infty} \leq \eta_{2}:\quad \|\clm{u}{t}(w_{t:0})\|_{\infty} \leq u_{\mathrm{max}} 
% \end{align}
% \end{subequations}
% Furthermore,

As in the linear SLS case \cite{anderson2019system}, we do not need to have the controller $\b{K}$ be a decision variable, since we can always realize the optimal solution $(\b{\Psi}^{(x)*},\b{\Psi}^{(u)*})$ to \eqref{eq:const-lqrclm} with a system level controller $\SL{\b{\Psi}^{(x)*}}{\b{\Psi}^{(u)*}}$.
\subsection{Conservativeness of Linear Solutions}\label{sec:linneg}
We will first discuss properties of solutions to our original problem \eqref{eq:const-lqr}, if we restrict ourselves to only LTI controllers $\b{K}$. Consider the equivalent problem formulation \eqref{eq:const-lqrclm} with the CLMs $(\CLM{x},\CLM{u})$ restricted to be linear. This poses a convex problem and as shown in \cite{SLSChenAnderson2019}, it can be approximately solved by searching over FIR CLMs $(\CLM{x},\CLM{u})$ with large enough horizon $T$. Yet, the corresponding linear optimal CLMs $(\CLM{x,lin*},\CLM{u,lin*})$ come with undesirable restrictions:

\begin{itemize}
    \item $(\CLM{x,lin*},\CLM{u,lin*})$ impose stricter safety constraints than the required constraints \eqref{eq:const-lqrclm-constx2} and \eqref{eq:const-lqrclm-constu2}. 
    \item $(\CLM{x,lin*},\CLM{u,lin*})$ do not depend on the disturbance distribution $p(w)$. 
\end{itemize}

To see the first point, we have the following result as a consequence of linearity:
\begin{lem}
    \label{lem:restrict}
For any linear $(\CLM{x,lin},\CLM{u,lin})$, the constraint \eqref{eq:const-lqrclm-constx2},\eqref{eq:const-lqrclm-constu2} is equivalent to
\begin{subequations}
\label{eq:strict}
\begin{align}
\sup_t |\clm{x,lin}{t}(w_{t:0})| &\leq \sup_t  \frac{x_{max}}{\eta_{max}} |w_t|\\
\sup_t |\clm{u,lin}{t}(w_{t:0})| &\leq \sup_t  \frac{u_{max}}{\eta_{max}}|w_t|.
\end{align}
\end{subequations}
\end{lem}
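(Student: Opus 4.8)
The statement is purely a consequence of linearity (more precisely, positive homogeneity of linear maps), so the plan is to recast both the left- and right-hand constraints as a bound on the $\ell_\infty\to\ell_\infty$ induced gain of the causal linear operators $\CLM{x,lin}$ and $\CLM{u,lin}$, and then exploit scaling. First I would observe that, for any fixed disturbance sequence $\b{w}$, one has $\sup_t|\clm{x,lin}{t}(w_{t:0})| = \Nxy{\CLM{x,lin}(\b{w})}{\infty}$ and $\sup_t \tfrac{x_{max}}{\eta_{max}}|w_t| = \tfrac{x_{max}}{\eta_{max}}\Nxy{\b{w}}{\infty}$; by causality each component function is a finite linear combination of the $w_\tau$, so no convergence issue arises and the interchange with $\sup_t$ is legitimate. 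Hence \eqref{eq:strict} is nothing but the pair of inequalities
\[
\Nxy{\CLM{x,lin}(\b{w})}{\infty}\le \tfrac{x_{max}}{\eta_{max}}\Nxy{\b{w}}{\infty},\qquad \Nxy{\CLM{u,lin}(\b{w})}{\infty}\le \tfrac{u_{max}}{\eta_{max}}\Nxy{\b{w}}{\infty}
\]
required to hold for all $\b{w}$, and since the two cases are structurally identical it suffices to treat the state inequality.

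For the ($\Leftarrow$) direction, assume the first inequality above. Then for every admissible $\b{w}$ with $\Nxy{\b{w}}{\infty}\le \eta_{max}$ we obtain $\Nxy{\CLM{x,lin}(\b{w})}{\infty}\le \tfrac{x_{max}}{\eta_{max}}\Nxy{\b{w}}{\infty}\le x_{max}$, which is exactly \eqref{eq:const-lqrclm-constx2}. For the ($\Rightarrow$) direction, assume \eqref{eq:const-lqrclm-constx2} and fix an arbitrary nonzero $\b{w}\in\ell^n_\infty$ (if $\Nxy{\b{w}}{\infty}=\infty$ the claimed bound holds vacuously). Rescale to $\b{w}':=\tfrac{\eta_{max}}{\Nxy{\b{w}}{\infty}}\,\b{w}$, which satisfies $\Nxy{\b{w}'}{\infty}=\eta_{max}$, so by hypothesis $\Nxy{\CLM{x,lin}(\b{w}')}{\infty}\le x_{max}$. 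Linearity gives $\CLM{x,lin}(\b{w}')=\tfrac{\eta_{max}}{\Nxy{\b{w}}{\infty}}\,\CLM{x,lin}(\b{w})$, and dividing the previous inequality by the positive scalar $\tfrac{\eta_{max}}{\Nxy{\b{w}}{\infty}}$ gives $\Nxy{\CLM{x,lin}(\b{w})}{\infty}\le \tfrac{x_{max}}{\eta_{max}}\Nxy{\b{w}}{\infty}$; the case $\b{w}=0$ is immediate. Taking suprema over $t$ recovers \eqref{eq:strict}.

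There is no genuine obstacle here; the only points requiring care are to interpret the quantifier in \eqref{eq:const-lqrclm-constx2} as ranging over all disturbance sequences in the $\eta_{max}$-ball of $\ell^n_\infty$ (consistent with \eqref{eq:XUconstlqr}) rather than over a single coordinate, and to invoke the causality observation that makes each $\clm{x,lin}{t}$ a finite sum so that the rescaling and the $\sup_t$ manipulations are unproblematic. It is worth following the proof with a short remark that this argument fails for the nonlinear blended maps \eqref{eq:NL-map} precisely because the projections $\pp{i}$ are not positively homogeneous — which is exactly the slack the nonlinear controller exploits in the remainder of \secref{sec:lqr}.
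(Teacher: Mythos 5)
Your proof is correct and is essentially the paper's argument spelled out in detail: the forward direction is the trivial implication, and the reverse direction is exactly the rescaling-by-$\eta_{max}/\Nxy{\b{w}}{\infty}$ step that the paper compresses into ``linearity and homogeneity of norms.'' No substantive difference in route, so nothing further is needed.
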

\begin{proof}
Clearly, \eqref{eq:strict} implies \eqref{eq:const-lqrclm-constx2},\eqref{eq:const-lqrclm-constu2}. The reverse implication follows by the assumed linearity of $(\CLM{x,lin},\CLM{u,lin})$ and homogeneity of norms.
\end{proof}
\lemref{lem:restrict} shows that the restriction of linearity in CLMs imposes stricter safety conditions \eqref{eq:strict} than \eqref{eq:const-lqrclm-constx2},\eqref{eq:const-lqrclm-constu2}. To elaborate on the second point, notice that for linear CLMs $(\CLM{x,lin},\CLM{u,lin})$, the objective function \eqref{eq:cost-lqrclm-cost2} can be expressed equivalently as 
\begin{align}\label{eq:H2simplify}
    \eqref{eq:cost-lqrclm-cost2} = \sigma^2\left\|\begin{array}{c}Q^{1/2} \CLM{x}\\ P^{1/2}\CLM{u} \end{array} \right\|^2_{\mathcal{H}_2},\quad \sigma^2:=\E_{w \sim p(w)}[w^2]
\end{align}
where $\sigma^2$ denotes the variance of the scalar distribution $p(w)$ and $\|.\|_{\mathcal{H}_2}$ denotes the $\mathcal{H}_2$ norm for linear operators. Since the objective function only gets scaled by a constant factor $\sigma^2$ for different distributions $p(w)$, this shows that for linear CLMs, the solutions $(\CLM{x,lin},\CLM{u,lin})$ to \eqref{eq:const-lqrclm} are independent of the distribution $p(w)$. 

\subsection{A Nonlinear System Level Approach}
\label{sec:nlsls-take}
Consider the general problem \eqref{eq:const-lqrclm}, where now we search over CLMs $(\CLM{x},\CLM{u})$ of the form presented in \eqref{eq:NL-map} with the choice of $\eta_2= \eta_{max}$, some $\eta_1<\eta_2$, and an FIR horizon $T$. Recall that $(\CLM{x},\CLM{u})$ is a blending of two linear CLMs and has the form \eqref{eq:NL-map}. Restricting ourselves to this form of CLMs allows to derive the following convex problem which is a relaxation of the general problem  \eqref{eq:const-lqrclm}:
% \begin{subequations}
% \label{eq:NL-map2}
% \begin{align}
%     \clm{x}{t}( w_{t:0}) = \sum^{\min\{T,t\}}_{k=1} &\R1{k} \p{1}{w_{t-k}} + \nonumber \\
%     & \R2{k}(\p2{w_{t-k}} - \p1{w_{t-k}}) \label{eq:NL-map-1}\\
%     \clm{u}{t}(w_{t:0}) = \sum^{\min\{T,t\}}_{k=1} &\M1{k} \p{1}{w_{t-k}} +\nonumber \\
%     & \M2{k}(\p2{w_{t-k}} - \p1{w_{t-k}}) \label{eq:NL-map-2}
% \end{align}
% \end{subequations}
% where $\{\R1{k}$, $\M1{k}\}_{k = 1:T}$ and $\{\R2{k}$, $\M2{k}\}_{k = 1:T}$ are the corresponding matrices of the linear CLMs $\{\CLM{x,1},\CLM{u,1}\}$ and  $\{\CLM{x,2}\CLM{u,2}\}$.
\begin{subequations}
\label{eq:S2}
\begin{align}
 \label{eq:cost-blend} \min_{\R{i}{}, \M{i}{}} & \left\|\begin{bmatrix} Q&0\\ 0 & P\end{bmatrix}^{1/2} \begin{bmatrix} \R1{} & \R2{} \\ \M1{}&\M2{}\end{bmatrix} \Sigma^{1/2}_w \right\|^{2}_{F}  \\
\label{eq:safety-1} 
s.t.\quad& \eta_1|\R1{}| + (\eta_2 - \eta_1)|\R2{}| \leq x_{max}\\
 &\eta_1|\M1{}| + (\eta_2 - \eta_1)|\M2{}| \leq u_{max} \label{eq:safety-2}\\
 & \R{i}{k+1} = A\R{i}{k} + B\M{i}k \label{lin-feas1}\\
 & \R{i}{1} = I,\quad \R{i}{T} = 0  \nonumber
\end{align}
\end{subequations}
where 
\begin{equation}
\label{eq:sigma} 
      \Sigma_w = \begin{bmatrix} \alpha_1 I & \alpha_2 I \\ \alpha_2 I &  \alpha_3 I  \end{bmatrix} \nonumber
    % \nonumber\\
    % \alpha_1 &= \E[\p1{w_t^i}^2]\nonumber \\
    % \alpha_2 &= \E[\p1{w_t^i} (\p2{w} -\p1{w_t^i})] \nonumber \\
    % \alpha_3 &= \E[(\p2{w_t^i} - \p1{w_t^i})^2] \nonumber\\
%       c_x &= ||[\R1T \R1{T-1} \R1{T-2} \dots \R11]||_{\infty}  \nonumber\\
%  c_u &= ||[\M1T \M1{T-1} \M1{T-2} \dots \M11]||_{\infty}   \nonumber
\end{equation}
with $\alpha_1 = \E[\p1{w}^2]$, $\alpha_2 = \E[\p1{w} (\p2{w} -\p1{w})]$, and $\alpha_3 = \E[(\p2{w} - \p1{w})^2]$, where $w \sim p(w)$ and $\|\b{w}\|_{\infty} \leq \eta_{max} $. Moreover $\R{i}{}$ and $\M{i}{}$ are abbreviations for the row-wise concatenation of the matrices associated with the linear CLMs $\CLM{x,i}$, $\CLM{u,i}$, \textit{i.e}, $\R{i}{} = [\R{i}{T}, \R{i}{T-1}, \dots ,\R{i}{1}]$, $\M{i}{} = [\M{i}{T}, \M{i}{T-1}, \dots ,\M{i}{1}]$. Hereby, only constraints \eqref{eq:safety-1}, \eqref{eq:safety-2} are sufficient condition of the constraint \eqref{eq:const-lqrclm-constx2}, \eqref{eq:const-lqrclm-constu2} via norm multiplicativity. All other equations in the above optimization are equivalent to the original problem \eqref{eq:const-lqrclm} restricting the search over CLMs of the form \eqref{eq:NL-map}. Finally, solving the convex problem \eqref{eq:S2} gives the sub-optimal nonlinear CLMs $(\CLM{*x}, \CLM{*u})$ for the system dynamics \eqref{eq:lin-dyn}, realized by an internally stabilizing controller $\SL{\CLM{*x}}{\CLM{*u}}$.
The next theorem states a main result of this paper:
\begin{thm}\label{thm:nlslsvssls}
For all $\eta_1 \in [0,\eta_2]$, the nonlinear system level controller $\SL{\CLM{*x}}{\CLM{*u}}$ synthesized from \eqref{eq:S2} achieves lower optimal LQR cost for \eqref{eq:const-lqr} than any linear solutions.
\end{thm}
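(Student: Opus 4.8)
\emph{Proof plan.} The plan is to treat the convex program \eqref{eq:S2} as a relaxation that, for \emph{every} $\eta_1\in[0,\eta_2]$, contains every linear FIR closed loop map as a feasible point with the same cost, and then to verify that the objective \eqref{eq:cost-blend} minimized by \eqref{eq:S2} is exactly the average LQR cost \eqref{eq:cost-lqr} that the resulting controller $\SL{\CLM{*x}}{\CLM{*u}}$ actually incurs on the true system. Write $v^\star$ for the optimal value of \eqref{eq:S2} and $J^{\mathrm{lin}}_T$ for the optimal LQR cost attainable by linear controllers with FIR horizon $T$. I would establish: (i) $v^\star\le J^{\mathrm{lin}}_T$, by exhibiting an optimal linear FIR map as a feasible point of \eqref{eq:S2} without changing the objective; and (ii) the LQR cost of $\SL{\CLM{*x}}{\CLM{*u}}$ on the true stochastic system equals $v^\star$. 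Combining (i)--(ii), and using \cite{SLSChenAnderson2019} to let $J^{\mathrm{lin}}_T$ tend to the optimal cost over all linear controllers as $T\to\infty$, then gives the theorem.

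For (i), take the matrices $\{R_k,M_k\}_{k=1}^{T}$ of an optimal linear FIR closed loop map for the linear-restricted version of \eqref{eq:const-lqrclm}, and set $\R1k=\R2k:=R_k$ and $\M1k=\M2k:=M_k$. The affine constraints \eqref{lin-feas1} and the boundary conditions then hold automatically. The safety constraint \eqref{eq:safety-1} becomes $\eta_1|R|+(\eta_2-\eta_1)|R|=\eta_2|R|=\eta_{max}|R|\le x_{max}$, which is exactly the condition Lemma~\ref{lem:restrict} shows to be equivalent to \eqref{eq:const-lqrclm-constx2} for linear maps (and likewise \eqref{eq:safety-2} versus \eqref{eq:const-lqrclm-constu2}); the telescoping of the two weights to $\eta_2$ is precisely what makes this work for \emph{every} $\eta_1$, so no particular choice of $\eta_1$ needs special handling. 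Substituting $\R1{}=\R2{}$, $\M1{}=\M2{}$ into \eqref{eq:cost-blend} and expanding the trace, the $\Sigma_w$-weighted Frobenius norm collapses to $(\alpha_1+2\alpha_2+\alpha_3)\|[Q^{1/2}R\,;\,P^{1/2}M]\|_F^2$; and since $\eta_2=\eta_{max}\ge\|\b{w}\|_\infty$ the outer projection acts as the identity on the support of $p(w)$, so $\alpha_1+2\alpha_2+\alpha_3=\E[\p2{w}^2]=\E[w^2]=\sigma^2$. By \eqref{eq:H2simplify}, together with $\|\cdot\|_{\mathcal{H}_2}^2=\sum_k\|\cdot\|_F^2$ for FIR maps, this equals the linear LQR cost of $\{R_k,M_k\}$; hence $v^\star\le J^{\mathrm{lin}}_T$.

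For (ii), since $\|\b{w}\|_\infty\le\eta_2=\eta_{max}$, the blended maps \eqref{eq:NL-map} satisfy the characterization \eqref{eq:feasnl}, so by Theorem~\ref{thm:sufnec} the controller $\SL{\CLM{*x}}{\CLM{*u}}$ reproduces them along the realized trajectory, i.e.\ $x_t=\clm{x}{t}(w_{t:0})$ and $u_t=\clm{u}{t}(w_{t:0})$; feasibility of the hard constraints \eqref{eq:XUconstlqr} is inherited from \eqref{eq:safety-1}--\eqref{eq:safety-2}. I would then expand $\E[\mathcal{J}(x_t,u_t)]$: writing $\clm{x}{t}(w_{t:0})$ as a sum over $k$ of terms, each a function of the single disturbance vector $w_{t+1-k}$, Assumption~\ref{assum:distribution} (independence across time and across coordinates, zero mean) makes all cross-terms vanish --- both those between distinct indices $k$ and those between distinct coordinates within one $w_{t+1-k}$ --- leaving a sum of quadratic forms whose coefficients are, by their very definitions, $\alpha_1$, $\alpha_2$, $\alpha_3$. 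Averaging the result over $t$ as in \eqref{eq:cost-lqr} and re-collecting the matrices reassembles precisely \eqref{eq:cost-blend}, so the incurred LQR cost equals $v^\star\le J^{\mathrm{lin}}_T$, which completes the argument.

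The step I expect to be the genuine obstacle is (ii): rigorously identifying the true stochastic closed-loop cost with the finite-dimensional quadratic \eqref{eq:cost-blend}. This uses the full strength of Assumption~\ref{assum:distribution} to annihilate the coordinate cross-terms, requires the zero-mean property to be preserved by the projection so that the different-time cross-terms also drop, and needs some care with the FIR truncation inside the time average. By contrast, the ``for all $\eta_1$'' clause --- which looks as if it should be the crux --- is immediate once one notices that the two safety weights always sum to $\eta_2$. Finally, note that the inequality is weak in general: at $\eta_1=\eta_2$ the program \eqref{eq:S2} degenerates into the linear FIR problem, so equality can occur.
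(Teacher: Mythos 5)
Your proof is correct and follows essentially the same route as the paper: the decisive step in both is that every linear FIR solution embeds into \eqref{eq:S2} by setting $\R1{}=\R2{}$ and $\M1{}=\M2{}$, with the safety weights telescoping to $\eta_2=\eta_{max}$ and the $\Sigma_w$-weighted objective collapsing to $\sigma^2$ times the $\mathcal{H}_2$ cost, so the relaxed program can only do better. Your part (ii) --- identifying \eqref{eq:cost-blend} with the stochastic cost actually incurred by $\SL{\CLM{*x}}{\CLM{*u}}$ --- is asserted without proof in the paper, and your observation that killing the cross-time and cross-coordinate terms requires the projections to preserve the zero mean (e.g.\ a symmetric $p(w)$, which Assumption~\ref{assum:distribution} does not literally guarantee) is a genuine caveat that the paper glosses over.
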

\begin{proof}
First, recall that restricting $\b{K}$ to be linear in problem \eqref{eq:const-lqr} is equivalent to restricting $\CLM{x}$ and $\CLM{u}$ to be linear in the equivalent formulation \eqref{eq:const-lqrclm}. Furthermore, notice that under the restriction of linear $(\CLM{x}, \CLM{u})$, problem \eqref{eq:const-lqrclm} is equivalent to \eqref{eq:S2} with the added constraint $\R1{} = \R2{}$, $\M1{}=\M2{}$, which shows that any solution $(\CLM{*x}, \CLM{*u})$ of problem \eqref{eq:S2} achieve smaller cost than a linear solution $(\CLM{x,lin*}, \CLM{u,lin*})$ of \eqref{eq:const-lqrclm}.
\end{proof}
%\begin{coro}\label{thm:n-nlslsvssls}
%The N-blend nonlinear system level controller $\SL{\CLM{x}}{\CLM{u}}$ of the form \eqref{eq:n-blend} constructed from the solution of \eqref{eq:S2} with \eqref{eq:safety-1} and \eqref{eq:safety-2} replaced with the following:
%\begin{subequations}
%\begin{align}
%&\sum_{i = 1}^N (\eta_i - \eta_{i-1})\|R^{(i)}\|_{\infty} \leq x_{max}\\
%&\sum_{i = 1}^N (\eta_i - \eta_{i-1})\|M^{(i)}\|_{\infty} \leq u_{max} 
% \end{align}
%\end{subequations}
%with  $\eta_{max} = \eta_N \geq\eta_{N-1} \geq \dots \geq \eta_0 =  0$
%achieves lower optimal LQR cost for \eqref{eq:const-lqr} than any linear solution to \eqref{eq:const-lqr}.
%\end{coro}
%\begin{proof}
%The same reasoning as \thmref{thm:nlslsvssls} applies.
%\end{proof}
\begin{rem}
The above argument extends directly to the N-blend case.
\end{rem}
% \begin{coro}\label{thm:n-nlslsvssls}
% 	The N-blend nonlinear system level controller $\SL{\CLM{x}}{\CLM{u}}$ of the form \eqref{eq:n-blend} constructed from the solution of \eqref{eq:S2} for $N$ linear CLMs
% 	with  $\eta_{max} = \eta_N \geq\eta_{N-1} \geq \dots \geq \eta_0 =  0$
% 	achieves lower optimal LQR cost for \eqref{eq:const-lqr} than any linear solution to \eqref{eq:const-lqr}.
% \end{coro}
% \begin{proof}
% 	The same reasoning as the proof of \thmref{thm:nlslsvssls} applies.
% \end{proof}

\subsection{Localized Controller for Constrained LQR}
\label{sec:ls-lqr}
 Thanks to the particular form of \eqref{eq:n-blend}, when the projection is chosen to be the saturation projection \defref{def:sat}, structural constraints of controller such as sparsity and delay constraints can be added in a convex way to the synthesis procedure described in \secref{sec:nlsls-take}. This is because imposing structural constraints on the nonlinear controller \eqref{eq:n-blend} is equivalent to imposing them on the linear CLM components of \eqref{eq:n-blend}. Detailed in \cite{anderson2019system}, localization of disturbance, communication and actuation delay, as well as sparsity pattern are all convex constraints in terms of linear CLMs in the linear System Level Synthesis framework. Specifically, all mentioned constraints could be cast as a convex subspace $\mathcal{S}_x$ and $\mathcal{S}_u$ for linear CLMs $\CLM{x,i}$,$\CLM{u,i}$,$i \in \set{N}$. The corresponding system level controller $\SL{\TU}{\TX}$ can then be implemented in a localized fashion conforming to the subspace constraints on  $\CLM{x,i}$,$\CLM{u,i}$. Therefore, the nonlinear controller synthesis in \secref{sec:nlsls-take} naturally inherits all capabilities of the linear system level controllers in terms of distributed controller synthesis and implementation.

\section{Distributed Anti-windup Controller for Saturated Systems}
\label{sec:ls}
Now consider a linear input saturated system where the disturbances and initial condition are \textit{not} necessarily constrained to have a known norm bound $\eta_{max}$. The control actions are projected via saturation projector:
\begin{align}
\label{eq:lin-sat}
x_{t} = Ax_{t-1}+ B\textit{P}_{u_{max}}(u_{t-1}) + w_t
\end{align}
In this scenario, controller $\SL{\CLM{x}}{\CLM{u}}$ previously constructed with \eqref{eq:NL-map} \textit{no longer} realizes the designed closed-loop response \eqref{eq:NL-map} for \eqref{eq:lin-sat}. Nevertheless,  we  would  like  the  saturated  system  to  degrade  gracefully  and  preserve  stability.  Such  property  is traditionally  achieved  via  anti-windup  design\cite{hu2008anti}. Here, we show that the proposed nonlinear controller achieves natural anti-windup property with little modification.

\subsection{Anti-windup Controller}
Inspired by internal model control (IMC) \cite{zheng1994anti}, we modify  $\SL{\CLM{x}}{\CLM{u}}$ and consider an augmented controller $\SL{\CLM{x,a}}{\CLM{u}}$ where the operator $\CLM{x,a}$ is constructed from \eqref{eq:NL-map} with augmentation:
\begin{align}
    \label{eq:augment}
    \CLM{x,a}_t(w_{t:0}) =\sum^{N}_{i=1} \Bigg( &\sum^{\min\{T,t+1\}}_{k=1} R^{(i)}_{k}  (\pp{i}-\pp{i-1})(w_{t+1-k}) \Bigg) \nonumber\\ 
        \,\,\,\,\, +  &\sum_{k = 1}^{\tau+1} A^{k-1} \left(w_{t+1-k}-\pp{N}(w_{t-k+1})  \right),
\end{align}
where $\tau$ is a positive integer. Recall that by design, we have chosen $\eta_{N} = \eta_{max}$, the expected norm bound on disturbances. Compared to \eqref{eq:NL-map} for the two-zone case ($N=2$) and \eqref{eq:n-blend} for the N-zone case, we note that \eqref{eq:augment} has the additional "open-loop" dynamics term. This extra term accounts for the residual disturbances that are not attenuated by the original controller $\SL{\CLM{x}}{\CLM{u}}$ because the disturbances are larger than expected by the projection mapping, \textit{i.e.}, $|w_t| > \eta_{max}$. Therefore, $\SL{\CLM{x,a}}{\CLM{u}}$ considers the $\tau$-step propagation of the unaccounted disturbances from $\SL{\CLM{x}}{\CLM{u}}$. Note that when the disturbances satisfy the assumption $\|\b{w}\|_{\infty} \leq \eta_{max}$, augmented controller $\SL{\CLM{x,a}}{\CLM{u}}$ is identical to $\SL{\CLM{x}}{\CLM{u}}$ constructed from \eqref{eq:n-blend} and \eqref{eq:NL-map}.

The IMC-like structure of the augmented controller $\SL{\CLM{x,a}}{\CLM{u}}$ helps the saturated system to degrade gracefully and preserve stability even when $\CLM{x,a},\CLM{u}$ are not the exact CLMs for the closed-loop system. The closed-loop dynamics of \eqref{eq:lin-sat} under augmented controller $\SL{\CLM{x,a}}{\CLM{u}}$ from \eqref{eq:augment} can be checked to be:
\begin{equation}
    \label{internal}
    \hat{w}_t = A^{\tau+1}(\hat{w}_{t-\tau}-\p{max}{\hat{w}_{t-\tau}}) + w_{t}.
\end{equation}
As shown in \cite{NLSLSHODoyle2019}, the stability of the overall closed loop is equivalent to the stability of \eqref{internal}. We now certify the anti-windup property of $\SL{\CLM{x,a}}{\CLM{u}}$ with the following result.
\begin{lem}
    \label{lem:transient}
    If $\tau$ satisfies $|A^{\tau+1}|<1$, then internal dynamics \eqref{internal} is globally finite-gain $\ell_{\infty}$-stable where for all $\b{w} \in \ell_{\infty}^n$,
    $$\|\hat{\b{w}}\|_{\infty} \leq \frac{1}{1-|A^{\tau+1}|} \|\b{w}\|_{\infty} $$ 
\end{lem}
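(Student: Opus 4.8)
The plan is a one-line contraction argument dressed up as an induction on the time index $t$. The only nontrivial ingredient is a norm estimate on the projection: for both the saturation projection \defref{def:sat} and the radial projection \defref{def:radial}, the \emph{residual} map $w \mapsto w - \pp{max}(w)$ is dominated by the identity in the $\infty$-norm, i.e.\ $|w - \p{max}{w}| \le |w|$ for every $w \in \RR^n$. Once this is in hand, \eqref{internal} gives $|\hat w_t| \le |A^{\tau+1}|\,|\hat w_{t-\tau}| + |w_t|$, and since $|A^{\tau+1}| < 1$ the asserted gain $\tfrac{1}{1-|A^{\tau+1}|}$ is exactly the fixed point of the scalar update $r \mapsto |A^{\tau+1}|\,r + \|\b w\|_\infty$; I would then propagate this bound forward in $t$ by induction.

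First I would check the residual estimate. For the saturation projection it is coordinatewise: $|w^j - \sat(w^j,\eta_{max})| = \max\{|w^j| - \eta_{max},\,0\} \le |w^j|$, and taking the maximum over $j$ gives $|w - \p{max}{w}| \le |w|$ (in fact equality with $\max\{|w|-\eta_{max},0\}$). For the radial projection, $w - \p{max}{w} = \bigl(1 - \tfrac{\sat(|w|/\eta_{max},1)}{|w|/\eta_{max}}\bigr)w$, whose scalar prefactor is $0$ when $|w| \le \eta_{max}$ and equals $1 - \eta_{max}/|w| \in [0,1)$ otherwise, so again $|w - \p{max}{w}| \le |w|$. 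This is the only step where the specific shape of the projection matters; everything afterward is projection-agnostic.

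Then comes the induction. Put $M := |A^{\tau+1}| \in [0,1)$ and $W := \|\b w\|_\infty$, finite by hypothesis; I claim $|\hat w_t| \le W/(1-M)$ for all $t \ge 0$. For the small indices for which \eqref{internal} does not reach back a full $\tau$ steps — where, under the controller's initial conditions ($\hat w_0 = x_0$, entries of negative index read as zero, and $\pp{max}(0)=0$) the recursion collapses to $\hat w_t = w_t$ — we get $|\hat w_t| \le W \le W/(1-M)$. For the inductive step, fix a $t$ governed by \eqref{internal}; since $\tau \ge 1$ we have $t - \tau < t$, so by the inductive hypothesis together with the residual estimate,
\[
|\hat w_t| \;\le\; M\,|\hat w_{t-\tau} - \p{max}{\hat w_{t-\tau}}| + |w_t| \;\le\; M\,|\hat w_{t-\tau}| + W \;\le\; M\cdot\frac{W}{1-M} + W \;=\; \frac{W}{1-M}.
\]
Taking the supremum over $t$ gives $\|\hat{\b w}\|_\infty \le \tfrac{1}{1-|A^{\tau+1}|}\|\b w\|_\infty$; in particular $\hat{\b w} \in \ell_\infty^n$ whenever $\b w \in \ell_\infty^n$, which is global $\ell_\infty$-stability, and the displayed inequality is exactly the finite-gain bound with bias $\beta = 0$. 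I do not anticipate a real obstacle: the substance is entirely in the projection estimate of the second paragraph, and the one point needing a little care is pinning down the initial conditions so that \eqref{internal} is a well-posed recursion — but any reasonable convention leaves the base values equal to $0$ or to $w_t$, so the induction closes. It is also worth noting that $|A^{\tau+1}| < 1$ is a hypothesis on the induced $\infty$-norm, which is precisely what makes $|A^{\tau+1}(\cdot)| \le |A^{\tau+1}|\,|\cdot|$ immediate.
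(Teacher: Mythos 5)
Your proof is correct, and it reaches the paper's conclusion by a more elementary route. The paper writes \eqref{internal} in operator form as $\hat{\b{w}} = (I-\Delta)^{-1}\b{w}$ with $\Delta_t(\hat{w}_{t:0}) = A^{\tau+1}\left(\hat{w}_{t-\tau}-\p{max}{\hat{w}_{t-\tau}}\right)$, asserts the gain bound $\|\Delta(\hat{\b{w}})\|_{\infty}\le |A^{\tau+1}|\,\|\hat{\b{w}}\|_{\infty}$, and then invokes the operator small-gain theorem \thmref{thrm} quoted from \cite{NLSLSHODoyle2019}. Your induction on $t$ is in effect a self-contained proof of that small-gain theorem specialized to $p=\infty$ and to this particular delay-$\tau$ operator, so the contraction at the core of both arguments is identical. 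What your write-up adds is the explicit verification that the residual map $w\mapsto w-\p{max}{w}$ is nonexpansive in the $\infty$-norm for both projections; the paper uses this bound on $\Delta$ without justification, and as you note it is the only point where the shape of the projection enters. What the paper's route buys is brevity and generality: \thmref{thrm} simultaneously supplies well-posedness of $(I-\Delta)^{-1}$ for strictly causal $\Delta$ and the statement for arbitrary $p$ and nonzero bias $\beta$, whereas you handle well-posedness by hand through the initial-condition convention (which is fine, since for $t<\tau$ the recursion indeed collapses to $\hat{w}_t=w_t$ and the base case holds). One small caveat: your coordinatewise computation reads $\sat(w,\eta)$ as $\mathrm{sign}(w)\min\{|w|,\eta\}$; this is the intended meaning (the $\max$ in \defref{def:sat} is evidently a typo, as the remark following it confirms), and under that reading your residual estimate is exactly right.
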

\begin{proof}
    See Appendix.
\end{proof}
In particular, if $A$ is schur, then there exists $k\in \mathbb{N}$ such that $\|A^k\|<1$ for any norm. Therefore, if \eqref{eq:lin-sat} is open-loop stable, $\SL{\CLM{x,a}}{\CLM{u}}$ guarantees graceful degradation when the closed-loop is saturated. 
%%%%%%%%%%%%%%%%%%%%

\subsection{Localized Implementation}
Similar to the large-scale constrained LQR case in \secref{sec:lqr}, since the anti-windup controller $\SL{\CLM{x,a}}{\CLM{u}}$ for the saturated linear system \eqref{eq:lin-sat} is composed of linear CLMs synthesized from \eqref{eq:S2} with locality constraints, localization can be easily imposed as a convex subspace constraint on the composing linear CLMs. When the the information structure of the controllers are constrained to the state propagation pattern according to open-loop dynamics \textit{i.e.}, the sparsity of $A$, the anti-windup controller $\SL{\CLM{x,a}}{\CLM{u}}$ can be implemented in a localized fashion where information is exchanged and disturbance is contained in a local controller patch \cite{wang2014localized}. As will be illustrated in \secref{sec:sim}, this allows for distributed anti-windup controller design for large-scale saturated systems.

\section{Simulation}
\label{sec:sim}
\subsection{Constrained LQR}
To corroborate the results presented in the previous sections, we demonstrate the performance of a four-zone nonlinear blending controller with radial projection compared against the optimal linear controller for the constrained LQR problem of an open-loop unstable system: 
\begin{align}
    x_{t} &= \begin{bmatrix}1&1&0\\1&2&1\\0&1&1 \end{bmatrix}x_{t-1} + \begin{bmatrix}0\\0\\1 \end{bmatrix}u_{t-1} +w_{t}
\end{align}
with $u_{max} = 40$, $x_{max} = 15$, $\eta_{max} = 1$, $Q = I_3$, $P = 10$. The disturbances $w_k$ are chosen to be a truncated i.i.d. gaussian random variables with variance $\sigma^2$. \figref{fig:performance} shows the optimal cost improvement of the presented nonlinear approach over the optimal linear controller for different choices of variance $\sigma^2$. \figref{fig:performance} showcases that the proposed controller can exploit the knowledge of the disturbance distribution to achieve performance improvement over the linear optimal linear controller: For small $\sigma$ the proposed controller gains more than 30\% cost reduction over safe controller. On the other hand, with increasing $\sigma$, large disturbances in the system become more likely, and therefore the opportunity to improve upon the linear optimal controller is reduced.
\begin{figure}[h]
 \centering
 \includegraphics[scale = 0.8]{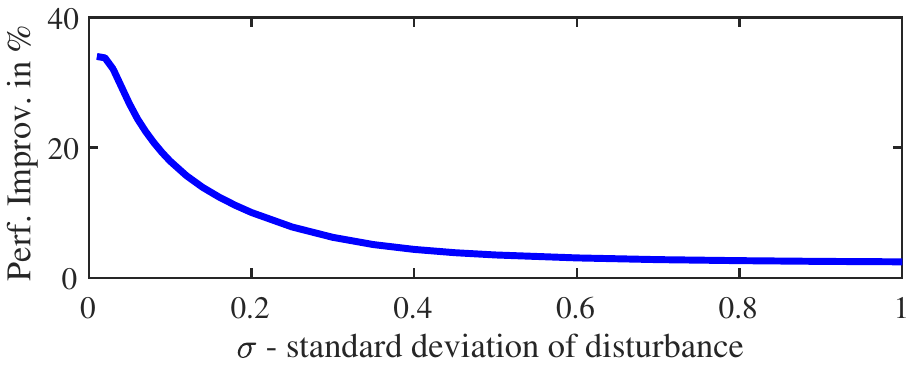}
 \caption{Performance improvement of optimal nonlinear controller $\SL{\CLM{*x}}{\CLM{*u}}$ over optimal linear controller $\SL{\CLM{x,lin*}}{\CLM{u,lin*}}$ for different variances $\sigma^2$ of the non-truncated disturbance. The nonlinear blending controller synthesizes over 4 linear controllers w.r.t. to the projection parameters $\eta_1= 0.05, \eta_2 = 0.1, \eta_3 = 0.2, \eta_4 = \eta_{max} = 1 $}
 \label{fig:performance}
\end{figure}

\subsection{Localized Anti-Windup Controller}
Consider a bi-directional chain system with $i$th node's dynamics being
\begin{equation}
\label{eqn:sim}
    x^i_{t+1} = (1- 0.4\lvert \mathcal{N}(i) \rvert  ) x^i_t + 0.4 \sum_{j \in \mathcal{N}_i} x^j_t + \sat(u^i_t,u_{max}) + w_{t}^i \nonumber
\end{equation}
where $\mathcal{N}(i)$ denotes the set of vertices that has an edge connected to $i$th vertex and $w_t^i$ is the $i$th coordinate of disturbance vector at time $t$. In particular, $\|\b{w}\|_{\infty}\leq 1$ and $x_0 = 0$.  One can check that the overall chain system is open-loop marginally (un)stable.

We illustrate the anti-windup property of the nonlinear controller \eqref{eq:n-blend} in the decentralized setting with additional sparsity, locality, and delay constraints in  \figref{fig:ls-worst}. First, a nominal integral controller for this system is designed and dubbed as the \textit{Integral Controller}. Due to its integral structure, the \textit{Integral Controller} for the unconstrained closed loop guarantees convergence of the state to the origin under persistent disturbance, \textit{i.e}, step rejection. In comparison, a second linear controller synthesized from standard constrained LQR problem that guarantees stability for all admissible $\b{w}$ under saturation is generated. We refer to this linear controller as the \textit{Non-integral Controller} since the states only stay bounded under persistent admissable disturbance.  

 The nonlinear controller with saturation projection here is chosen to be a two-zone blending controller consisted of CLMs of the form \eqref{eq:NL-map}. The simulation shows the anti-windup property as well as preservation of step rejection in both large- and small-disturbance schemes of the proposed method. \figref{fig:ls-worst} shows that the blending controller stabilizes the system while integral controller becomes unstable under worst-case bounded disturbance. On the other hand, the proposed blending controller preserves performance of step rejection while the linear Non-integral Controllers forfeits the performance objective in order to preserve stability in the saturated closed loop. In this chain example, we allow 1 time step communication delay between nodes and actuation delay with 50\% control authority. The localization pattern imposed on the system response allows $\SL{\CLM{x}}{ \CLM{u}}$ to be implemented in local patches, therefore making the controller distributed.
% \begin{equation}
% \label{eqn:sim}
%     x^i_{t+1} = \rho (1-\lvert \mathcal{N}(i) \rvert \alpha ) x^i_t + \rho \alpha \sum_{j \in \mathcal{N}_i} x^j_t + u^i_t
% \end{equation}

\begin{figure*}
\centering
%   \subfigure[Integral Controller under Worst-case Disturbance/under Staggered Step Disturbance]{ \includegraphics[scale = 0.5, trim = 5cm 7.8cm 6cm 7.8cm, clip]{worst-rpa.pdf}}
 \subfigure[Integral Controller]{ \includegraphics{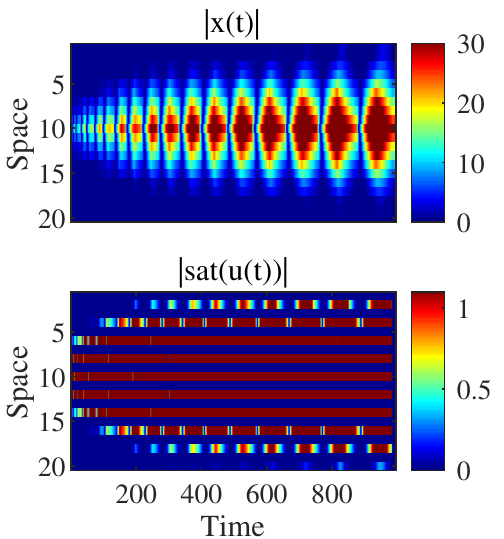}}
%  \caption{Response of Integral Controller}
  \subfigure[Non-integral Controller]{ \includegraphics{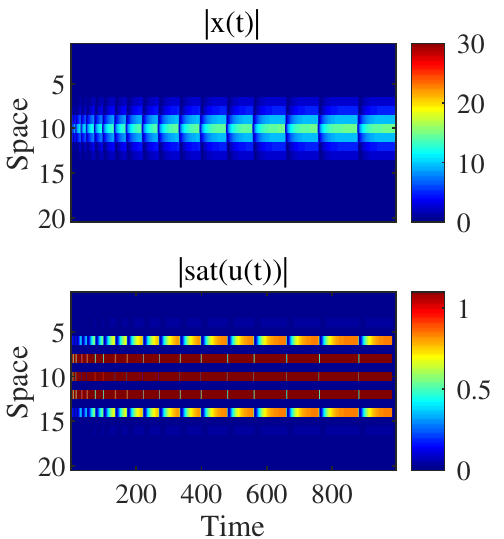}\label{fig:ls-act}}
    \subfigure[Nonlinear Blending Controller ]{ \includegraphics{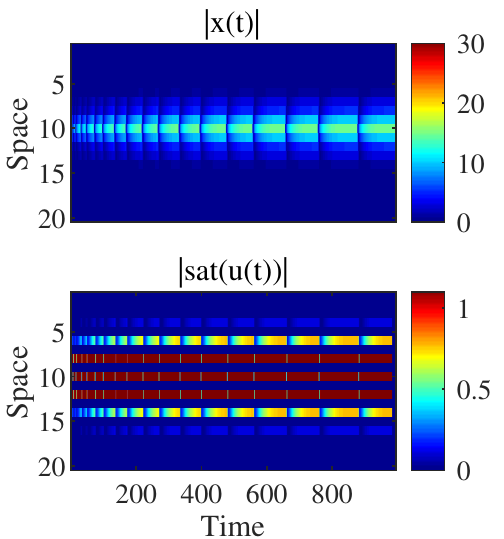}\label{fig:ls-act}}
     % \subfigure[Integral Controller ]{ \includegraphics[scale = 0.48, trim = 5cm 8.5cm 6cm 8.5cm, clip]{step-rpa.pdf}}
     \subfigure[Integral Controller ]{ \includegraphics{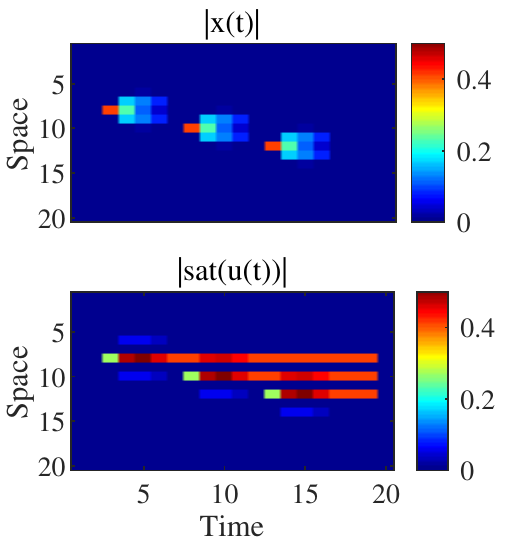}}   
%  \caption{Response of Integral Controller}
  \subfigure[Non-integral Controller ]{ \includegraphics{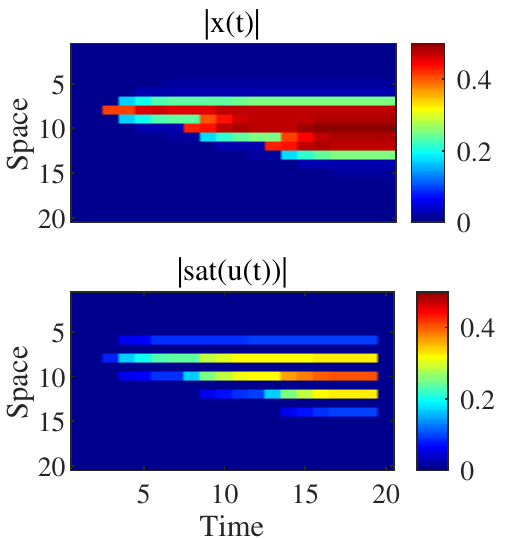}\label{fig:ls-act}}
    \subfigure[Nonlinear Blending Controller ]{ \includegraphics{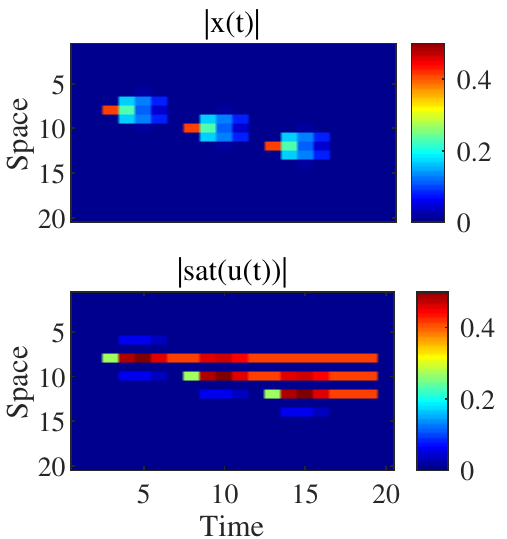}\label{fig:ls-act}}
    \caption{\textit{\textbf{Worst-case (top row) and Staggered Step Input (bottom row) Response} under Saturation for a marginally stable 20-node Chain System with Sparse Actuation}: \textit{Top Row}: The heatmaps show how a worst-case disturbance is propagated through space-time for the saturated chain system. The integral controller becomes unstable due to saturation and the naive blending controller possesses has the anti-windup property of the non-integral controller. In addition to  anti-windup, the proposed controller is localized and accommodates sparse actuation, communication delay, and controller sparsity constraints. Here every other node has a control input (50\% actuation) with 1 time step actuation delay and 1 time step communication delay between nodes, while enforcing a controller sparsity that conforms to the communication pattern of dynamics matrix $A$. \textit{Bottom Row}: Response to small step disturbances at node 8,10,12 entering at time 2,6,10, respectively. As in the scalar case, the proposed blending controller not only stabilizes under saturation but also recovers the performance objective of rejecting small step disturbances. This contrasts against the non-integral controller, which sacrifices small-signal performance for stability. }
   \label{fig:ls-worst}
\end{figure*} 

%Worst-case (top row) and Staggered Step Input (bottom row) Response} under Saturation for a marginally stable 20-node Chain System with Sparse Actuation}: \textit{Top Row}: The heatmaps show how a worst-case disturbance is propagated through space-time for the saturated chain system. The integral controller becomes unstable due to saturation and the nonlinear controller possesses the anti-windup property of the non-integral controller.
%In addition to  anti-windup, the proposed controller is localized and accommodates sparse actuation, communication delay, and controller sparsity constraints. Here every other node has a control input (50\% actuation) with 1 time step actuation delay and 1 time step communication delay between nodes, while enforcing a controller sparsity that conforms to the communication pattern of dynamics matrix $A$. \textit{Bottom Row}: Response to small step disturbances at node 8,10,12 entering at time 2,6,10, respectively.The proposed blending controller not only stabilizes under saturation but also recovers the performance objective of rejecting small step disturbances. This contrasts against the non-integral controller, which sacrifices small-signal performance for stability. 
\section{conclusion}
\label{sec:conclusion}
    We showcase the nonlinear system level approach developed in \cite{NLSLSHODoyle2019} and illustrate the use cases for a class of nonlinear system level controllers. We propose a tractable nonlinear control synthesis method that outperforms any optimal linear controller for the constrained LQR problems. It was further shown that such controller naturally possesses anti-windup property for linear systems with input saturation. A key highlight is that the presented approach enjoys the same compatibility with locality/ delay constraints and distributed implementation, as the linear system level approach\cite{anderson2019system}. Overall, this paper is a first step in exploring the full potential of the new nonlinear control synthesis framework developed in \cite{NLSLSHODoyle2019} and highlights that even just the presented special case of the framework, called "nonlinear blending" of linear controllers, offers many benefits.
    
%    As our immediate next steps, we will investigate how to extend the nonlinear blending approach to a more general constrained LQR problem, where polytopic constraints are considered and how it can be used for analysis and design of anti-windup schemes in small and large-scale systems.

\bibliographystyle{IEEEtran}
\bibliography{refs}

\appendix
\section{Proof of Lemma VI.1}
We first present an operator small-gain theorem.
\begin{thm}[Small-gain Theorem\cite{NLSLSHODoyle2019}]
    \label{thrm}
    Let $\Delta \in \scausal{n}{n}$. If for all $\b{x} \in \ell_p^n$, $\|\Delta(\b{x})\|_p \leq \gamma \|\b{x}\|_p +\beta$ with $0<\gamma<1$, $\beta \geq 0$, $p = 1,2,\dots, \infty$, then for all $\b{w} \in \ell_p^n$, $ \|\hat{\b{w}}\|_p \leq \frac{1}{1-\gamma} (\|\b{w}\|_p +\beta)$ where $\hat{\b{w}} = (I-\Delta)^{-1}\b{w}$.
\end{thm}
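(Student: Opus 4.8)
The plan is to prove the bound by a finite-horizon (truncation) argument that converts the hypothesis ``$0<\gamma<1$'' into a contraction estimate on each truncated signal; this is the standard input--output small-gain argument. First I would check that $(\b I-\Delta)^{-1}$ is even well defined as a causal map on all of $\ell^n$. Since $\Delta\in\scausal{n}{n}$, its component functions satisfy $\Delta_t(x_{t:0})=\Delta_t(0,x_{t-1:0})$, so the fixed-point relation $\hat{\b w}=\Delta(\hat{\b w})+\b w$ reads coordinate-wise $\hat w_t=\Delta_t(0,\hat w_{t-1:0})+w_t$, which determines $\hat w_t$ recursively from $\hat w_{0:t-1}$ and $w_t$. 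Hence for every $\b w\in\ell^n$ there is a unique $\hat{\b w}$ solving it, i.e.\ $\hat{\b w}=(\b I-\Delta)^{-1}\b w$ is a well-defined causal operator, and it remains only to establish the quantitative $\ell_p$ estimate.

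Let $\b P_N$ denote the truncation onto the first $N+1$ coordinates, $(\b P_N\b x)_t=x_t$ for $t\le N$ and $0$ otherwise. Strict causality yields the key compatibility $\b P_N\Delta(\hat{\b w})=\b P_N\Delta(\b P_{N-1}\hat{\b w})$, because for every $t\le N$ the value $\Delta(\hat{\b w})_t$ depends only on $\hat w_0,\dots,\hat w_{t-1}$, all of index at most $N-1$. Applying $\b P_N$ to the fixed-point identity, then using the triangle inequality, the gain hypothesis on $\Delta$ (legitimately, since $\b P_{N-1}\hat{\b w}$ is a finitely supported, hence $\ell_p$, sequence), and the monotonicity $\|\b P_{N-1}\b x\|_p\le\|\b P_N\b x\|_p\le\|\b x\|_p$, I obtain
\begin{align*}
\|\b P_N\hat{\b w}\|_p
&\le \|\b P_N\Delta(\b P_{N-1}\hat{\b w})\|_p+\|\b P_N\b w\|_p \\
&\le \gamma\,\|\b P_{N-1}\hat{\b w}\|_p+\beta+\|\b w\|_p \\
&\le \gamma\,\|\b P_N\hat{\b w}\|_p+\beta+\|\b w\|_p .
\end{align*}
Since $0<\gamma<1$, rearranging gives $\|\b P_N\hat{\b w}\|_p\le\frac{1}{1-\gamma}\big(\|\b w\|_p+\beta\big)$, uniformly in $N$.

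Finally I would let $N\to\infty$: for finite $p$, $\|\b P_N\hat{\b w}\|_p^p=\sum_{t=0}^N|\hat w_t|^p$ increases to $\|\hat{\b w}\|_p^p$, while for $p=\infty$, $\|\b P_N\hat{\b w}\|_\infty=\max_{t\le N}|\hat w_t|$ increases to $\sup_t|\hat w_t|=\|\hat{\b w}\|_\infty$; in either case the uniform bound passes to the limit, so $\|\hat{\b w}\|_p\le\frac{1}{1-\gamma}(\|\b w\|_p+\beta)<\infty$, and in particular $\hat{\b w}\in\ell_p^n$, which is exactly the claim. I expect the only delicate points to be (i) justifying the identity $\b P_N\Delta(\hat{\b w})=\b P_N\Delta(\b P_{N-1}\hat{\b w})$ purely from strict causality, together with the observation that the gain hypothesis may be invoked on a truncated signal because it lies in $\ell_p^n$, and (ii) treating the $p=\infty$ case of the limiting step separately from finite $p$; neither is deep, but both should be stated explicitly for a rigorous argument.
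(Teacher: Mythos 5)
Your proof is correct. Note that the paper itself does not prove this theorem --- it is imported verbatim from \cite{NLSLSHODoyle2019} with only the remark that the inverse $(\b{I}-\Delta)^{-1}$ exists by strict causality --- so there is no in-paper argument to compare against. Your truncation argument is the standard one and is sound: the recursive well-posedness of $\hat w_t=\Delta_t(0,\hat w_{t-1:0})+w_t$, the identity $\b{P}_N\Delta(\hat{\b{w}})=\b{P}_N\Delta(\b{P}_{N-1}\hat{\b{w}})$ from strict causality, the uniform-in-$N$ contraction bound, and the monotone limit (treated separately for finite $p$ and $p=\infty$) are all correctly justified.
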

Note that the inverse exists because $\Delta \in \causal{n}{n}$ \cite{NLSLSHODoyle2019}. We are now in a position to prove \lemref{lem:transient}. We can write \eqref{internal} in the operator form as 
\begin{equation}
    \label{eqn:cl}
    \hat{\b{w}} = (I-\Delta)^{-1}\b{w},
\end{equation}
where $\Delta$ is a strictly causal operator with component function $\Delta_t(\hat{w}_{t:0}) := A^{\tau+1}(\hat{w}_{t-\tau}-\p{max}{\hat{w}_{t-\tau}})$. For all $\hat{\b{w}} \in \ell_{\infty}^n$, $\|\Delta(\hat{\b{w}} )\|_{\infty} \leq |A^{\tau+1}| \|\hat{\b{w}} \|_{\infty}$ where we have chosen $\tau$ such that $|A^{\tau +1}|<1$. Therefore, invoking \thmref{thrm} gives the desired result in \lemref{lem:transient}.

\end{document}